\definecolor{darkblue}{rgb}{0,0,0.5}
\newtheorem{theorem}{Theorem}
\newtheorem{corollary}{Corollary}
\newtheorem{lemma}{Lemma}
\newenvironment{proof}[1][Proof]{\noindent\textbf{#1.} }{\ \rule{0.5em}{0.5em}}
\def\be{\begin{equation}}
\def\ee{\end{equation}}
\def\ba{\begin{eqnarray}}
\def\ea{\end{eqnarray}}
\def\bal{\begin{equation}\begin{aligned}}
\def\eal{\end{aligned}\end{equation}}
\def\bp{\begin{pmatrix}}
\def\ep{\end{pmatrix}}
\newcommand{\calD}{{\cal D}}
\newcommand{\1}{^{(1)}}
\newcommand{\state}[1]{\ketbra{#1}{#1}}
\DeclareRobustCommand{\Arrow}[1][]{%
\check@mathfonts
\if\relax\detokenize{#1}\relax
\settowidth{\dimen@}{$\m@th\rightarrow$}%
\else
\setlength{\dimen@}{#1}%
\fi
\sbox\z@{\usefont{U}{lasy}{m}{n}\symbol{41}}%
\begin{picture}(\dimen@,\ht\z@)
\roundcap
\put(\dimexpr\dimen@-.7\wd\z@,0){\usebox\z@}
\put(0,\fontdimen22\textfont2){\line(1,0){\dimen@}}
\end{picture}%
}
\begin{document}

\title{Quantum-enhanced learning with a controllable bosonic variational sensor network}
\author{Pengcheng Liao${}^{1}$}

\author{Bingzhi Zhang${}^{1,2}$}

\author{Quntao Zhuang${}^{1,2}$}
\email{qzhuang@usc.edu}

\affiliation{
${}^{1}$Ming Hsieh Department of Electrical and Computer Engineering, University of Southern California, Los
Angeles, California 90089, USA
\\
${}^{2}$Department of Physics and Astronomy, University of Southern California, Los
Angeles, California 90089, USA
}

\begin{abstract}

The emergence of quantum sensor networks has presented opportunities for enhancing complex sensing tasks, while simultaneously introducing significant challenges in designing and analyzing quantum sensing protocols due to the intricate nature of entanglement and physical processes. Supervised learning assisted by an entangled sensor network (SLAEN) [Phys. Rev. X 9, 041023 (2019)] represents a promising paradigm for automating sensor-network design through variational quantum machine learning. However, the original SLAEN, constrained by the Gaussian nature of quantum circuits, is limited to learning linearly separable data. Leveraging the universal quantum control available in cavity-QED experiments, we propose a generalized SLAEN capable of handling nonlinear data classification tasks. We establish a theoretical framework for physical-layer data classification to underpin our approach. Through training quantum probes and measurements, we uncover a threshold phenomenon in classification error across various tasks---when the energy of probes exceeds a certain threshold, the error drastically diminishes to zero, providing a significant improvement over the Gaussian SLAEN. Despite the non-Gaussian nature of the problem, we offer analytical insights into determining the threshold and residual error in the presence of noise. Our findings carry implications for radio-frequency photonic sensors and microwave dark matter haloscopes.
\end{abstract}
\maketitle


\section{Introduction}

Quantum sensors leverage resources such as squeezing and entanglement to enhance sensitivity, dynamical range, and bandwidth. In canonical sensing scenarios like single-sensor optical interferometric phase sensing, the Gaussian nature of the problem~\cite{weedbrook2012gaussian} not only facilitates the intuitive design of squeezing injection to enhance sensitivity~\cite{caves1981quantum,ganapathy2023broadband}, but also enable well-calibrated theoretical sensitivity predictions to guide experimental efforts~\cite{demkowicz2013fundamental,escher2011general}.

As quantum sensor capabilities advance, quantum advantage can now extend beyond single sensors to distributed quantum sensor networks~\cite{zhuang2018distributed,ge2018distributed,zhang2021distributed}, addressing complex and dynamic sensing scenarios in optical phase sensing~\cite{guo2020distributed,liu2021distributed}, RF photonic sensing~\cite{xia2020demonstration}, optomechanical force sensing~\cite{xia2023entanglement}, and networks of atomic clocks~\cite{malia2022distributed}. However, the rise of complex sensor networks involving multipartite entanglement presents significant challenges in designing and analyzing optimal quantum sensing protocols. 

To address the challenge, Ref.~\cite{zhuang2019physical} proposed applying variational quantum circuits (VQC)~\cite{cerezo2021variational}---an important tool in near-term quantum computing---to adaptively train an entangled sensor network for classifying physical-layer data modeled as bosonic quantum channels. The proposed supervised learning assisted by an entangled sensor network (SLAEN) was later experimentally verified~\cite{xia2021quantum}, demonstrating quantum advantage in physical-layer data classification. However, the original SLAEN relies on non-universal Gaussian quantum resources~\cite{weedbrook2012gaussian} and is limited to reducing classification errors in linearly separable data, thereby limiting its practical relevance given the prevalence of nonlinear data in common machine learning tasks.

In this study, we develop a controllable bosonic variational sensor network for nonlinear physical-layer data classification, empowered by universal quantum control available in various experimental systems~\cite{eickbusch2022fast,diringer2023conditional,fluhmann2019encoding,hauer2023nonlinear,gerashchenko2024towards}. Across various classification tasks, we observe a threshold phenomenon in classification error, wherein the error abruptly decreases to zero at a specific probe energy threshold. This phenomenon, supported by our theoretical foundation on displacement data classification, offers a substantial advantage over previous Gaussian SLAEN approaches~\cite{zhuang2019physical,xia2021quantum}. Despite the non-Gaussian nature of the problem, we provide analytical results to determine the threshold and demonstrate robustness to noise. We apply the controllable bosonic variational sensor network to two examples, modeling microwave dark matter haloscopes~\cite{backes2021quantum,shi2023ultimate,brady2022entangled} and radio-frequency (RF) photonic sensors~\cite{xia2020demonstration}, and demonstrate the quantum advantage enabled by the threshold phenomenon.


We begin our paper with the description of the protocol in Section~\ref{sec:scheme} and then proceed to the general theory framework in Section~\ref{sec:theory}. Then we begin considering examples of linear separable data in Section~\ref{sec:linear_case} and nonlinear data in Section~\ref{sec:nonlinear}. Finally, we conclude with discussions in Section~\ref{sec:conclusion}.

\section{Controllable bosonic variational sensor networks}
\label{sec:scheme}

Unlike various proposals for quantum machine learning algorithms aimed at solving classical problems, the SLAEN approach~\cite{zhuang2019physical} addresses quantum data classification within physical processes, a domain that has recently garnered significant attention in the quantum machine learning community~\cite{huang2022quantum,banchi2021generalization,caro2023out,gebhart2023learning}. Depending on the sensor types employed, the physical-layer data in the form of bosonic quadrature displacements can originate from diverse scenarios. For instance, in radar detection, RF signals can be transduced into the optical domain as quadrature displacements, which are measurable by optical sensor networks~\cite{xia2020demonstration}; Optomechanical sensors~\cite{li2021cavity,liu2021progress,xia2023entanglement}, based on optical-mechanical coupling, can detect force~\cite{gavartin2012hybrid}, acceleration~\cite{krause2012high}, and magnetic fields~\cite{forstner2012cavity} by measuring quadrature displacements. Moreover, fundamental physics sensing experiments can also be represented as quadrature displacement data classification tasks. The Laser Interferometer Gravitational-Wave Observatory (LIGO)~\cite{ganapathy2023broadband}, for instance, transduces gravitational wave signals into optical phase shifts, which are then converted into quadrature displacements. Similarly, in microwave dark matter haloscopes\cite{Sikivie:1983ip,backes2021quantum,brady2022entangled}, potential dark matter candidates such as axions induce random excess noise in cavities, which can be modeled as random quadrature displacements. These examples highlight optical quadrature displacement (at different wavelengths) as a universal model for such physical-layer data, leading us to focus on optical quadrature displacement classification in this study.

\begin{figure}[t]
    \centering
    \includegraphics[width=\columnwidth]{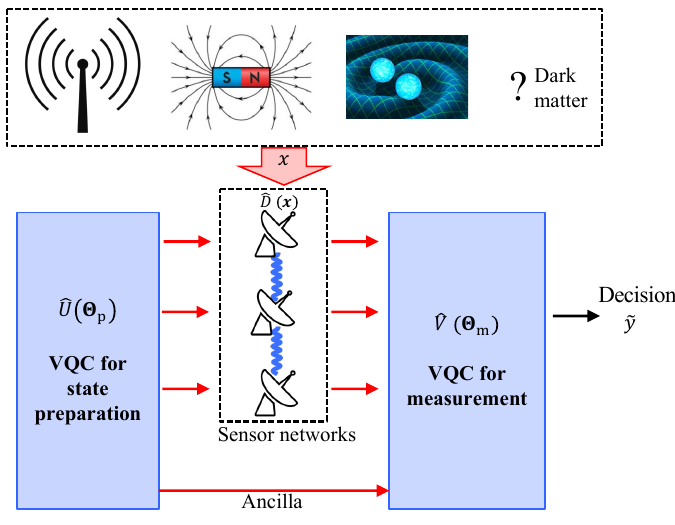}
    \caption{Schematic of the SLAEN empowered by universal control.}
    \label{fig:concept}
\end{figure}

As proposed in Ref.~\cite{zhuang2019physical}, the architecture of SLAEN is depicted in Fig.~\ref{fig:concept}, where two VQCs are involved to empower a sensor network for physical-layer data classification. The first VQC realizes a parameterized unitary $\hat{U}({\bm \theta}_p)$ to prepare the entangled probe for the sensor network; while the second VQC implements the quantum measurement for decision-making via another parameterized unitary $\hat{V}({\bm \theta}_m)$ and standard measurements, such as homodyne, heterodyne or single-qubit Pauli measurements. In between the two VQCs, physical-layer data induces a unitary $\hat{D}(\bm x)$ on the probe via the quantum sensor network, transducing information from other domains onto the quantum states. In this work, we focus on the displacement operator $\hat{D}(\bm x)\equiv \otimes_{m=1}^M \hat{D}(q_m+i p_m)$ with $\bm x=(q_1,p_1,\cdots, q_M,p_M)$ being the displacements on position and momentum quadratures, which model the various applications introduced above. Here $\hat{D}(\beta) \equiv \exp(\beta \hat{a}^\dagger-\beta^* \hat{a})$ is the single-mode quadrature displacement operator on the mode described by annihilation operator $\hat{a}$ and we can define the position and momentum quadrature operator as $\hat q=({\hat a+\hat a^\dagger})/{\sqrt{2}}$ and $\hat p=({\hat a-\hat a^\dagger})/{\sqrt{2}i}$.

\begin{figure}[t]
    \centering
    \includegraphics[width=\columnwidth]{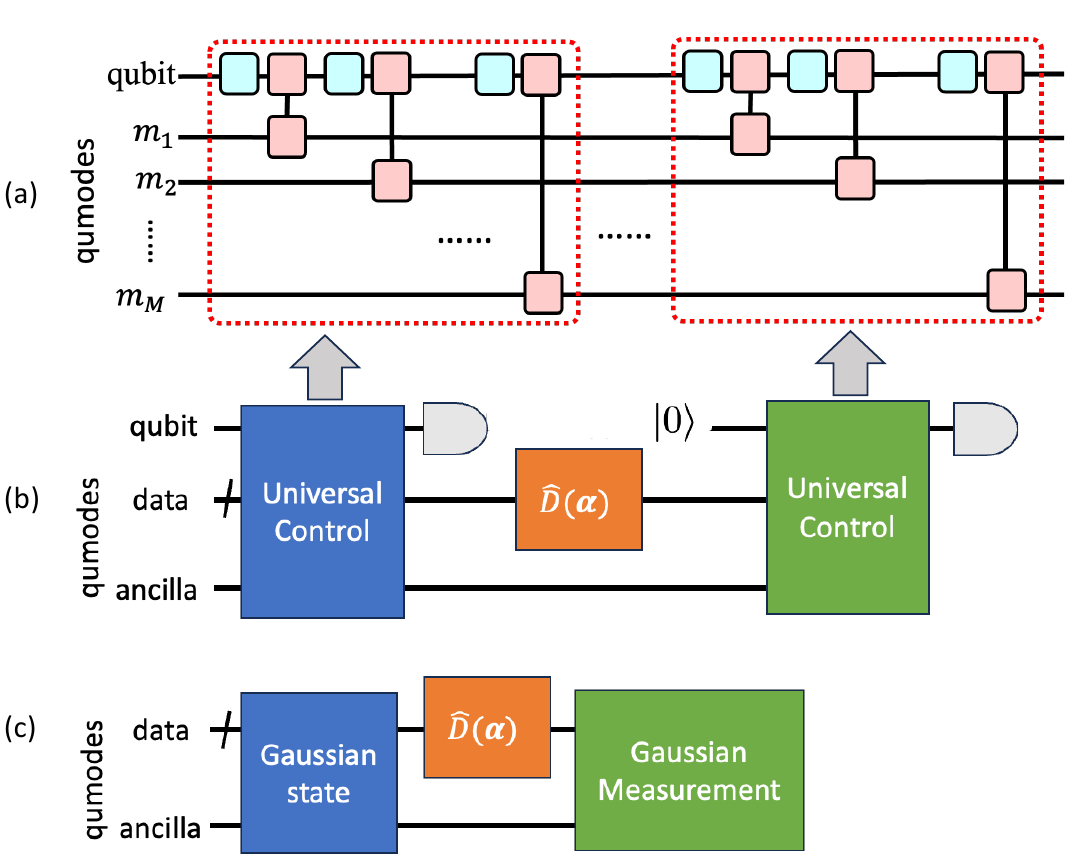}
    \caption{(a) Schematic of the universal control circuit over a qubit coupled to $M$ qumodes $m_1,\cdots, m_M$. The cyan boxes represent single-qubit rotations and the pink connected boxes represent qubit-qumode coupling such as the echoed conditional displacement (ECD) gates. Additional control qubits can also be introduced if necessary. In (b) and (c), we show schemes to learning displacement data $\hat{D}(\alpha)$ via the general universal control VQC and Gaussian scheme separately.}
    \label{fig: ECD-circuit}
\end{figure}

To go beyond the previous generation of SLAEN, we allow entanglement ancillae that are stored parallel to the sensing process of the sensor network. More importantly, while both unitaries $\hat{U}({\bm \theta}_p)$ and $\hat{V}({\bm \theta}_m)$ in the previous SLAEN are from linear optical quantum circuits in the class of Gaussian unitaries~\cite{weedbrook2012gaussian}, we propose to adopt universal quantum control to tackle nonlinear data classification. Such universal control is well supported by experimental efforts in various quantum information processing tasks, such as quantum computing and bosonic quantum error correction~\cite{brady2023advances}. In the microwave domain, thanks to the strong nonlinearity, universal quantum control with echoed conditional displacement (ECD) gates~\cite{eickbusch2022fast} and conditional-NOT displacement gates~\cite{diringer2023conditional} is readily available in cavity quantum electrodynamics (QED) experiments, suitable for empowering microwave dark matter haloscopes~\cite{Sikivie:1983ip,backes2021quantum,brady2022entangled}. In the optical frequencies, universal control based on optical cavity-QED is also proposed~\cite{hastrup2022protocol} in theory, paving the way to enhance optical quantum sensors. The universal control also extends to mechanical resonators~\cite{gerashchenko2024towards}, which further broadens the scope of applicability of our results to mechanical signal processing and testing of fundamental physics of quantum gravity~\cite{penrose1996gravity}.

As different approaches of universal control are equivalent by definition, we will consider the ECD gate without loss of generality, where a single qubit is introduced to control the displacement of a quantum mode (qumode)
\be 
\hat{U}_{\rm ECD}(\beta) = \hat{D}(\beta)\otimes \ket{1}\bra{0} + \hat{D}(-\beta)\otimes\ket{0}\bra{1}.
\ee 
ECD gates enable universal control over a system consisting of multiple qubits and qumodes~\cite{zhang2023energy}, when
combined with single qubit rotations $\hat{U}_{\rm R}(\xi, \phi) = \exp[-i\xi/2(\cos\phi\hat{\sigma}^x + \sin\phi\hat{\sigma}^y)]$, where $\hat{\sigma}^x$ and $\hat{\sigma}^y$ are Pauli-X and Y operators. For $M$ modes, each of the VQCs $\hat{U}({\bm \theta}_p)$ and $\hat{V}({\bm \theta}_m)$ has the structure depicted in Fig.~\ref{fig: ECD-circuit}, where $L$ layers of ECD gates are applied. To benchmark the advantage of entanglement ancillae in supervised learning, we consider both entanglement-assisted (EA) VQC and unassisted VQC---denoted as EA-VQC and VQC correspondingly.

The set of training parameters ${\bm \theta}_p$ and ${\bm \theta}_m$ each contain the displacements $\{\beta_\ell\}_{\ell=1}^L$ and the qubit rotation angles $\{\xi_\ell, \phi_\ell\}_{\ell=1}^L$. 
To enable the testing between $H$ hypotheses, we can consider having $\lceil\log_2 H \rceil$ qubits to be measured in the computational bases on the output side of the VQC to provide the classification $\tilde{y}$. For simplicity, we denote the states $\{\ket{\tilde{y}}\}$ with integers $\tilde{y}\in[0,H-1]$ as the computation bases of $\lceil\log_2 H \rceil$ qubits directly, without specifying the tensor structure.

To train the variational parameters $\bm \theta_p, \bm \theta_m$ towards correct classification, we consider a set of training data $\calD\equiv \{ \hat{D}(\bm x_k), y_k\}_{k=1}^N$, where $y_k\in \{0,1,\cdots, H-1\}$ is the class label and $N$ is the total number of data. For displacement $\hat{D}(\bm x_k)$, the output quantum state of the overall system before measurement is
\be 
\ket{\Phi(\bm x_k;{\bm \theta}_m,{\bm \theta}_p)}=\hat{V}({\bm \theta}_m)\hat{D}(\bm x_k)\hat{U}({\bm \theta}_p)\ket{vac}\otimes \ket{0},
\ee 
where $\ket{vac}$ means all bosonic modes are initially vacuum and $\ket{0}$ means all qubits involved are initially in the zero state.
The probability of measuring the result $\tilde{y}_k$ on the output state is therefore
\be 
P(\tilde{y}_k|\bm x_k)= 
\expval{\left(\state{\tilde{y}_k}\otimes \hat{I}\right)}{\Phi(\bm x_k, {\bm \theta}_m,{\bm \theta}_p)},
\ee 
where identity $\hat{I}$ is over all remaining systems including bosonic modes and any ancillary qubits.

Given a set of parameters $\bm \theta_p, \bm \theta_m$ for the VQC and the data $\calD$, the error probability of the hypothesis testing can be expressed as: 
\be 
P_{\rm E}(\hat{U}({\bm \theta}_p), \hat{V}({\bm \theta}_m);\calD,N_S)=1- \mathbb{E}_{(\bm x_k,y_k)\sim \calD} P(\tilde{y}_k=y_k|\bm x_k).
\label{pe_N_data}
\ee 
To consider practically relevant strategies, we also introduce energy regularization on the input bosonic modes to the sensors---the total mean occupation number $\sum_{k=1}^M  \expval{\hat{m}_k^\dagger\hat{m}_k}\le N_S$, where the expectation value is over the state $\hat{U}({\bm \theta}_p)\ket{vac}\otimes \ket{0}$. Since the physical process of interest is qudrature displacements, we can choose equality in the regularization without loss of generality, as each lower energy state has the equivalent performance to a displaced version of the state with energy exactly $N_S$.

\section{Data transform and threshold phenomenon}
\label{sec:theory}

Before optimizing VQCs $\hat{U}({\bm \theta}_p)$ and $\hat{V}({\bm \theta}_m)$, we also hope to obtain analytical insights into how the performance depends on the data structure. For our theory analyses, we can describe the data $\calD$ via a joint probability distribution $P_{XY} (\bm x;y)$ of displacement $\hat{D}(\bm x)$ and the label $y$.
We are interested in solving the minimum Bayesian error probability 
\begin{align}
P_{\rm E}^\star (\calD,N_S)=\min_{\hat{U}, \hat{V}} P_{\rm E}(\hat{U},\hat{V}; \calD, N_S),
\label{PE_opt}
\end{align}
where $P_{\rm E}$ is the error probability in Eq.~\eqref{pe_N_data}.

To obtain analytical insights into data structure, we make use of the property of displacements under Gaussian unitary pre-processing and post-processing. For a zero-mean Gaussian unitary $U_{\bm{S}}$ described by the $2M\times 2M$ symplectic matrix $\bm{S}$, we have~\cite{zhuang2019Scrambling}
\be
\hat{U}^\dagger_{\bm{S}} \hat{D}(\bm{\zeta}) \hat{U}_{\bm{S}} =  \hat{D} (\bm{S}^{-1} \bm{\zeta}).
\ee    
Such an affine coordinate transformation can be adopted to provide reduction between different data distributions, $P_{XY}^\prime (\bm x,y) = |S|P_{XY}(S \bm x,y)$; however, such a transform may not be free, as the energy of the quantum state prior to the data displacement also changes in a state-dependent fashion. In Appendix~\ref{app:Gaussian_transform}, we prove a general result on data transform in Theorem~\ref{the:app}, detailing the above intuition in a rigorous manner. Fortunately, in some cases, we are able to make use of such a transform to obtain useful insights, when ancillae are introduced (see Appendix~\ref{app:proof_theorem1} for a proof).

\begin{lemma}
\label{theorem:pe}
(stronger signal reduces error in EA-VQC)
Denote $P_{\rm E}^\star(\calD,N_S)$ as the optimal error probability of the EA-VQC in the discrimination between multiple ensembles of displacement operators, where the data $\calD$ of $M$-mode displacements is described by a joint distribution $P_{XY}$, under the overall input energy constraint $\sum_{k=1}^M\expval{\hat{m}_k^\dagger \hat{m}_k}\le N_S$. We have the relationship
\be 
P_{\rm E}^\star(\calD^\prime, N_S^\prime)\le P_{\rm E}^\star(\calD, N_S),
\label{pe_inequality}
\ee 
where the new data $\calD^\prime$ has the distribution $P^\prime_{XY}(\bm x;y)=c^{2M} P_{XY}(c \bm x;y)$ linearly transformed and $N_S^\prime=c^2N_S+M\max[c^2-1,0]$.
\end{lemma}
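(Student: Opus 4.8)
The plan is to prove the inequality by constructing, for the rescaled data $\calD'$, an explicit EA-VQC strategy that respects the budget $N_S'$ and achieves error exactly $P_E^\star(\calD,N_S)$; since $P_E^\star(\calD',N_S')$ is a minimization over all admissible strategies, the bound~\eqref{pe_inequality} follows immediately. I would start by noting that a sample $\bm x'\sim P'_{XY}$ has the law of $\bm x/c$ with $\bm x\sim P_{XY}$, so the sensor network in the $\calD'$ problem imprints the displacement $\hat{D}(\bm x/c)$, and the task reduces to isotropically rescaling this displacement back to $\hat{D}(\bm x)$ so that the $\calD$-optimal probe and measurement can be reused.

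The central observation is that an isotropic rescaling $\hat{D}(\bm{\zeta})\mapsto\hat{D}(c\,\bm{\zeta})$ is \emph{not} a single-mode symplectic map---the relation $\hat{U}_{\bm S}^\dagger\hat{D}(\bm{\zeta})\hat{U}_{\bm S}=\hat{D}(\bm S^{-1}\bm{\zeta})$ with a valid symplectic $\bm S$ can only squeeze one quadrature while anti-squeezing the other---so it cannot be performed for free and must instead be realized by an ancilla-assisted Gaussian channel. For $c\ge 1$ I would append $M$ fresh vacuum ancilla modes (permitted in EA-VQC and not counted in the sensing-energy budget) and let $\hat{U}'$ prepare $\hat{A}_G^{\otimes M}(\hat{U}^\star\ket{vac}\ket{0}\otimes\ket{vac}_{\rm anc})$, where $\hat{A}_G$ is the two-mode squeezer (phase-insensitive amplifier) of gain $G=c^2$ coupling each sensing mode to its ancilla and $(\hat{U}^\star,\hat{V}^\star)$ is the optimizer for $\calD$. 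The Bogoliubov relation $\hat{A}_G^\dagger\hat{a}\hat{A}_G=c\,\hat{a}+\sqrt{c^2-1}\,\hat{b}^\dagger$ gives $\hat{A}_G^\dagger\hat{D}_a(\alpha/c)\hat{A}_G=\hat{D}_a(\alpha)\otimes\hat{D}_b(-\sqrt{c^2-1}\,\alpha^*/c)$, so choosing $\hat{V}'=\hat{V}^\star(\hat{A}_G^\dagger)^{\otimes M}$ makes the pre-measurement state factorize into $[\hat{D}(\bm x)\hat{U}^\star\ket{vac}\ket{0}]$ tensored with coherent states on the ancillae; discarding the ancilla modes then reproduces the $\calD$ discrimination problem for every data point, yielding error exactly $P_E^\star(\calD,N_S)$.

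The step requiring genuine care is the energy bookkeeping. The same Bogoliubov relation gives an input sensing energy $\langle\hat{a}^\dagger\hat{a}\rangle=c^2 n_k+(c^2-1)$ on mode $k$, where $n_k$ is the original per-mode energy; summing over the $M$ sensing modes and invoking the equality $\sum_k n_k=N_S$ yields total sensing energy $c^2 N_S+M(c^2-1)=N_S'$, with the ancilla energy incurring no charge. For $c<1$ I would run the identical argument with a pure-loss beam splitter of transmissivity $\eta=c^2$ in place of the amplifier; attenuation injects no noise photons, so the sensing energy is merely $c^2 N_S$, reproducing $N_S'=c^2 N_S+M\max[c^2-1,0]$ in the combined formula. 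The main obstacle is conceptual rather than computational: recognizing that free symplectic processing cannot isotropically rescale a displacement forces the ancilla-assisted amplifier/attenuator, and the unavoidable added-noise term $M\max[c^2-1,0]$ is exactly the price of amplification; once this is identified, the state factorization and the energy count are routine. This isotropic rescaling is the clean specialization of the general symplectic transform of Theorem~\ref{the:app}, whose state-dependent energy change collapses here to the closed-form budget $N_S'$.
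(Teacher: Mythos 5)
Your proposal is correct and follows essentially the same route as the paper: the paper proves Lemma~\ref{theorem:pe} by specializing its general Gaussian data-transform result (Theorem~\ref{the:app}) to mode-wise two-mode squeezers (for $c>1$) and beamsplitters (for $c<1$), each coupling a sensing mode to a vacuum ancilla, with the energy accounting of a quantum-limited amplifier / pure-loss channel---exactly the construction you carry out directly via the Bogoliubov relations. The only difference is presentational: you inline the explicit state factorization and energy bookkeeping for the isotropic rescaling, whereas the paper packages the same mechanism inside the general symplectic theorem and then instantiates it.
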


The consequence of Lemma~\ref{theorem:pe} is intuitive: the error probability monotonically decreases as the `signal' strength increases under the same probe energy constraint if entanglement ancillae are allowed. When $c<1$, the new problem has increased displacements by a scaling factor $1/c>1$. At the same time, the energy constraint in the new problem $N_S^\prime =c^2 N_S$ is lowered, therefore $P_{\rm E}^\star(\calD^\prime,N_S)\le P_{\rm E}^\star(\calD^\prime,N_S^\prime)\le P_{\rm E}^\star(\calD,N_S)$, which tells us that a better signal provides a lower error probability. As intuitive as it may seem, such reduction of error probability is not always guaranteed when no entanglement ancillae are allowed, as we detail in the later part of the paper (also see Ref.~\cite{gorecki2022quantum}).

More importantly, the intuitive decrease of error probability implies that if the EA-VQC can achieve zero error probability at a particular signal strength, then the error probability remains zero when the signal further increases upon training---supporting a threshold phenomenon of error probability. To emphasize this consequence, we summarize it in the following corollary.
\begin{corollary}
\label{theorem:EA_threshold}
(EA-VQCs support a threshold phenomenon) If an EA-VQC achieves zero error probability at a signal strength, then the error probability will remain zero when signal strength further increases.
\end{corollary}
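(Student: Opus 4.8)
The plan is to obtain this corollary directly from Lemma~\ref{theorem:pe}, treating the threshold phenomenon as a two-line consequence of the monotone error reduction combined with the elementary monotonicity of the optimal error in the energy budget. First I would make precise what ``signal strength increases'' means in the lemma's notation: amplifying every displacement by a factor $1/c > 1$ (equivalently taking $c \in (0,1)$) produces the data ensemble $\calD^\prime$ with $P^\prime_{XY}(\bm x;y) = c^{2M} P_{XY}(c\bm x;y)$, which is exactly the linear transform appearing in the lemma. Because $c < 1$, we have $\max[c^2-1,0]=0$, so the lemma's transformed budget collapses to $N_S^\prime = c^2 N_S < N_S$.

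Next I would invoke Lemma~\ref{theorem:pe} for this choice of $c$, which yields $P_{\rm E}^\star(\calD^\prime, c^2 N_S) \le P_{\rm E}^\star(\calD, N_S)$. If the EA-VQC already achieves zero error at the original signal strength, that is $P_{\rm E}^\star(\calD, N_S) = 0$, then non-negativity of the error probability forces $P_{\rm E}^\star(\calD^\prime, c^2 N_S) = 0$. Hence the stronger-signal problem is solved with perfect accuracy, though so far only under the reduced energy budget $c^2 N_S$.

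The remaining step, and the only one that needs an explicit justification, is to lift this conclusion back to the original (or any larger) energy budget at which we actually wish to certify zero error. Here I would observe that $P_{\rm E}^\star(\calD^\prime, \cdot)$ is non-increasing in its energy argument: relaxing the constraint $\sum_{k=1}^M \expval{\hat{m}_k^\dagger \hat{m}_k} \le N_S$ only enlarges the feasible set of probe-preparation unitaries $\hat{U}(\bm\theta_p)$, so the minimization in Eq.~\eqref{PE_opt} runs over a superset and its optimal value can only decrease. Since $N_S \ge c^2 N_S$, this monotonicity gives $P_{\rm E}^\star(\calD^\prime, N_S) \le P_{\rm E}^\star(\calD^\prime, c^2 N_S) = 0$, hence $P_{\rm E}^\star(\calD^\prime, N_S) = 0$, which is precisely the claimed persistence of zero error as the signal grows.

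I do not anticipate a genuine obstacle: the corollary essentially restates the chain $P_{\rm E}^\star(\calD^\prime,N_S)\le P_{\rm E}^\star(\calD^\prime,N_S^\prime)\le P_{\rm E}^\star(\calD,N_S)$ already highlighted in the discussion following the lemma. The single point worth stating carefully is the energy-budget monotonicity, which bridges the small gap between the lemma's reduced budget $N_S^\prime=c^2N_S$ and the original budget $N_S$; everything else is an immediate application of the lemma together with non-negativity of the error probability.
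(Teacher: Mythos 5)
Your proposal is correct and takes essentially the same route as the paper: the paper obtains the corollary from precisely the chain $P_{\rm E}^\star(\calD^\prime,N_S)\le P_{\rm E}^\star(\calD^\prime,N_S^\prime)\le P_{\rm E}^\star(\calD,N_S)$ in the discussion following Lemma~\ref{theorem:pe}, using the same $c<1$ specialization and the same energy-budget monotonicity (implicit there, made explicit by you). No gap to report.
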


\section{Linear separable case: threshold phenomenon}
\label{sec:linear_case}

We begin the study with a simple toy problem of binary displacement classification, where the two candidate displacements $\pm \epsilon$ are real without loss of generality. Due to the simplicity of the problem, we expect entanglement ancillae to be unnecessary in the optimal strategy and therefore focus on the VQC strategy (without EA) in comparison to Gaussian strategies. The classification error probability for different $\epsilon$ is shown in Fig.~\ref{fig:binary_n=1}(a). In the small $\epsilon\ll 1$ region, we expect the error probability of all protocols to be close to $1/2$, including the Gaussian scheme (orange) with squeezed state input and homodyne measurement (see Fig.~\ref{fig: ECD-circuit} bottom) and the VQC results (blue). 
With the increase in $\epsilon$, the Gaussian strategy error probability $P_{\rm E}=\frac{1}{2}{\rm Erfc}(\sqrt{2}\epsilon/e^{-r})\sim \exp\left(-2e^{2r}\epsilon^2\right)$ decays with the increasing signal amplitude continuously (orange), where the squeezing strength $r$ is related to energy regularization as $N_S=\sinh^2r$~\cite{zhuang2019physical}. In contrast, the VQC scheme demonstrates a threshold phenomenon where $P_{\rm E}=0$ is achieved up to numerical precision ($\lesssim 10^{-15}$) when $\epsilon$ is above a critical value $\epsilon_{\rm th}$. In this simple toy problem, the increasing signal $\epsilon$ directly provides a monotonically decreasing error probability for VQC, even without entanglement assistance (cf. Theorem~\ref{theorem:pe} and corollary~\ref{theorem:EA_threshold}). In later examples, we will see cases where entanglement assistance is required to guarantee the monotonic decrease. The threshold phenomenon is unique to the controllable VQC with non-Gaussian resources, while absent in strategies based on Gaussian resources such as squeezing. Indeed, even when provided with photon counting (green) or the general measurement that reaches the Helstrom limit~\cite{Helstrom_1967,Helstrom_1976} (red dashed), the squeezed state has a continuously decreasing finite error probability as the signal strength $\epsilon$ increases.

To gain insight into the VQC scheme, we plot the Wigner functions and photon number distributions of the optimal probe states $\hat{U}({\bm \theta}_p)\ket{vac}\otimes \ket{0}$ before the sensor-induced displacement in Fig.~\ref{fig:binary_n=1} (c) and (d) for various different values of $\epsilon$. The Wigner functions show an asymmetric peak resembling that of a squeezed vacuum; however, the tail of the squeezed vacuum is highly non-Gaussian which eliminates the Gaussian error tail. In particular, we find that for a fixed value of $\epsilon>\epsilon_{\rm th}$, the optimal probe state that leads to zero error is not unique, while below the threshold, the converged optimal states are identical up to numerical convergence error for each value of $\epsilon$.

\begin{figure}
    \centering
    \includegraphics[width=\linewidth]{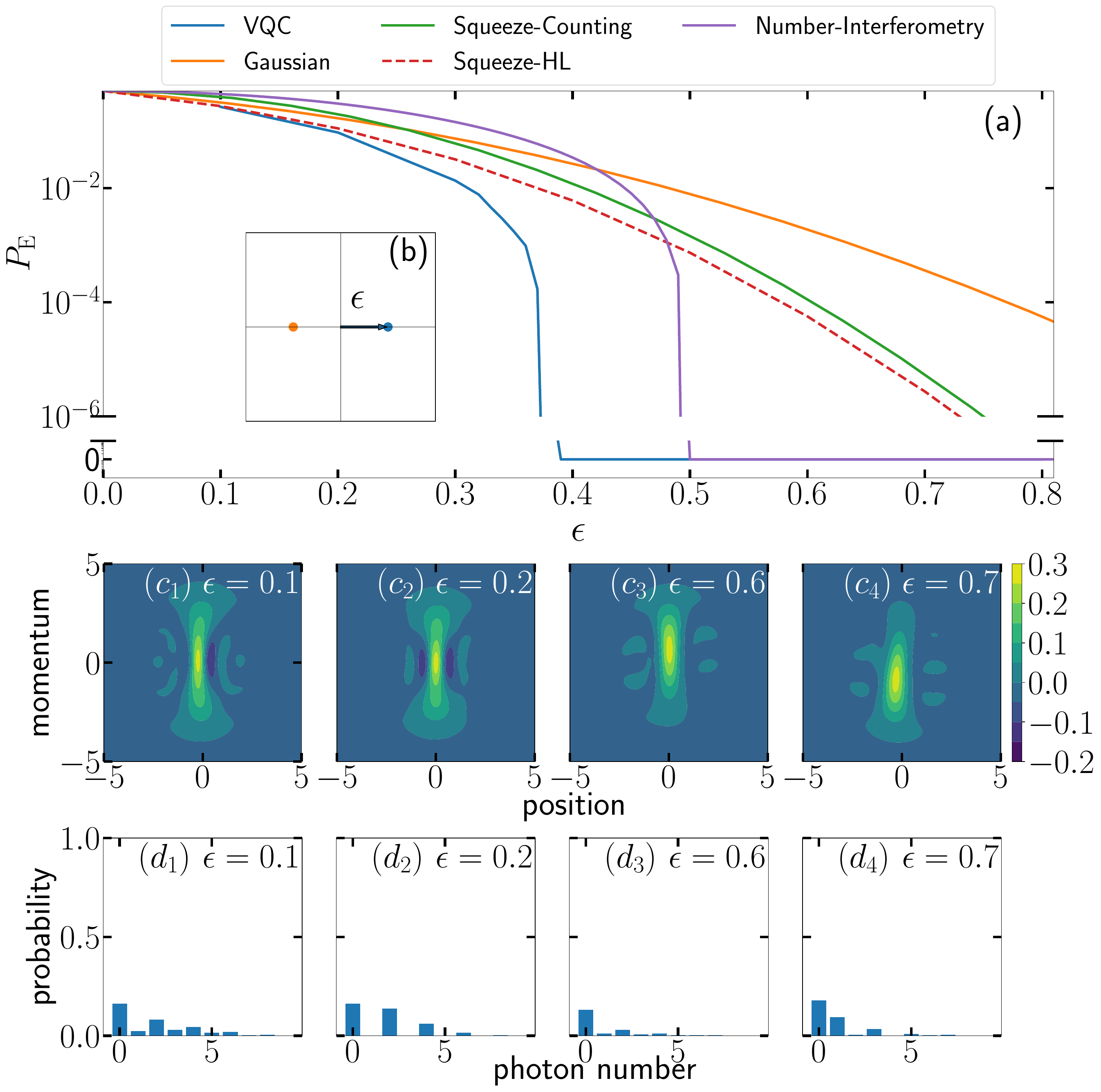}
    \caption{Binary displacement discrimination with probe state energy $N_S=1$. (a) Error probability versus displacement amplitude $\epsilon$. Here `0' in $P_{\rm E}$ refers to error probability $\lesssim 10^{-15}$ reaches the numerical precision. Inset (b) shows the two hypothesis of displacement $\pm \epsilon$. (c1)-(c4) show the Wigner function of optimal probe states from VQC for different $\epsilon$. (d1)-(d4) show their corresponding photon number distribution.}
    \label{fig:binary_n=1}
\end{figure}

Following the similarity to squeezed vacuum at small $\epsilon$, we also provide a first-order theory to explain the critical $\epsilon_{\rm th}$ for a given mean photon number $N_S$. For any input probe state $\ket{\psi}$, the two displacements map the probe state $\ket{\psi}$ to $\ket{\psi_\pm}=\hat{D}(\pm \epsilon)\ket{\psi}$. In the small $\epsilon$ limit, we have the overlap
\begin{align}
    |\expval{\psi_+|\psi_-}|^2&= \left\lvert\expval{\psi|\hat{D}^\dagger (-\epsilon) \hat{D}(\epsilon)|\psi} \right\rvert^2 
    \\&= 1 - 4\epsilon^2 \expval{\hat{p}}^2 + \mathcal{O}(\epsilon^3).
\end{align}
Zero-error discrimination requires the states to be orthogonal, $|\expval{\psi_+|\psi_-}|^2=0$, which leads to $\epsilon^2 \expval{\hat{p}}^2 = 1/4$ to the leading order. Recall that the mean occupation number is $N_S = \left(\expval{\hat{q}^2} + \expval{\hat{p}^2} - 1 \right)/2$ for a single mode. Due to uncertainty principle $\expval{\hat{q}^2} \cdot \expval{\hat{p}^2}\ge 1/4$ for zero-mean quantum state, we have the first-order estimation on the critical signal $\epsilon_{\rm th}$ as
\begin{align}
    \epsilon_{\rm th} = \frac{1}{2 \sqrt{\max\expval{\hat{p}^2}}} = \left[\frac{1 + 2N_S - 2\sqrt{N_S(N_S + 1)}}{2}\right]^{1/2}.
    \label{eth_asymp}
\end{align}
In Fig.~\ref{fig:binary_error_threshold_at_different_energy}(b), we see agreement in the scaling between Eq.~\eqref{eth_asymp} and the numerical results, where the thresholds (yellow dots) are identified in VQC training for various $N_S$ values as shown in Fig.~\ref{fig:binary_error_threshold_at_different_energy}(a). 

\begin{figure}
    \centering
    \includegraphics[width=1.0\linewidth]{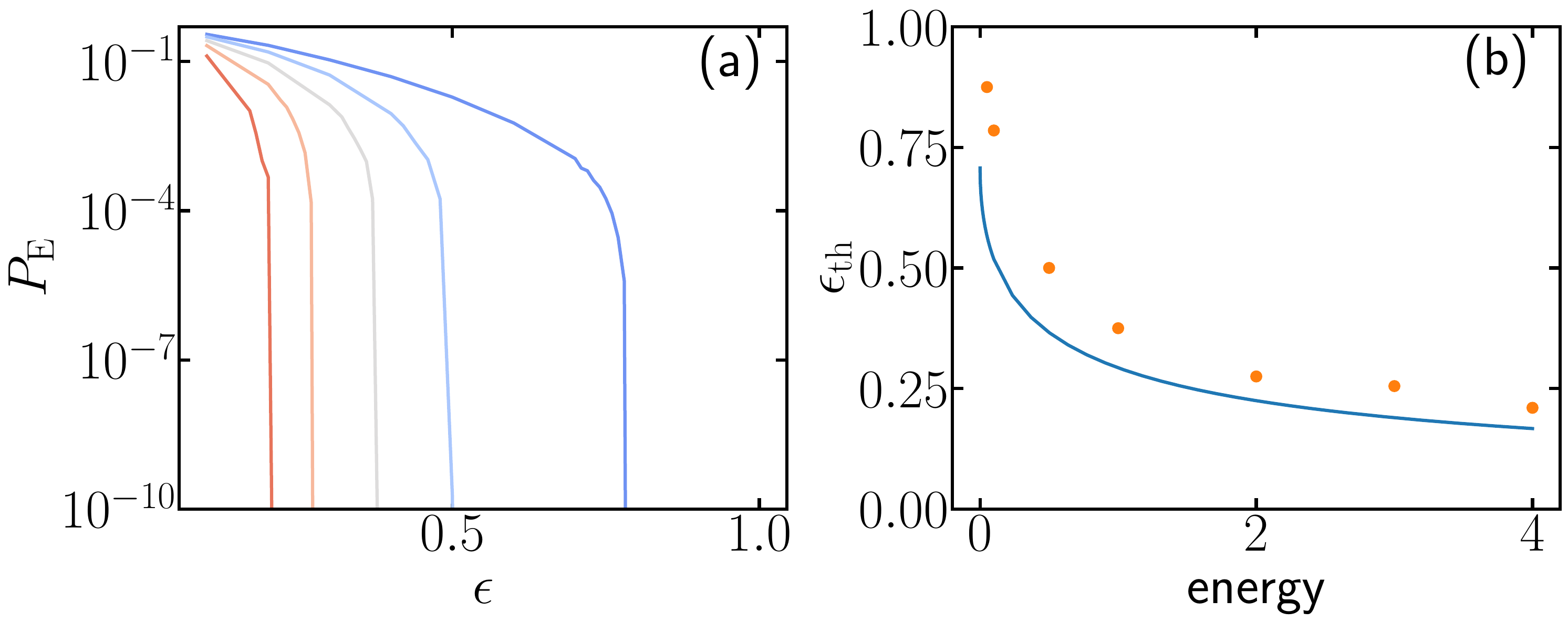}
    \caption{(a) Error probability $P_{\rm E}$ of optimal state from VQC versus displacement amplitudes $\epsilon$ under various energy constraint $N_S = 0.1,0.5,1,2,4$ (from blue to red). (b) The corresponding threshold $\epsilon_{\rm th}$ versus energy $N_S$. Orange dots are numerical VQC results and the blue solid line is asymptotic results in Eq.~(\ref{eth_asymp}).}
    \label{fig:binary_error_threshold_at_different_energy}
\end{figure}

The threshold phenomenon represents a large error-probability gap between the VQC strategy and Gaussian quantum strategy or other classical strategies. 
Now we address the robustness of the threshold phenomenon to noise to certify the practical advantage. The standard noise model for the continuous-variable modes is random distribution of displacements, as typically considered in bosonic quantum error correction~\cite{gkp2001,brady2023GKPrvw}. We consider a more general noise model where a random unitary $\hat{U}_{\bm \zeta}=\exp(-i{\bm \zeta}\cdot \hat{\bm g})$ is applied on the probe, with a list of generators $\hat{\bm g}=\{\hat{g}_j\}$. Without loss of generality, we can assume the distribution of random error amplitude $P(\bm \zeta)$ is zero-mean. With the random noise, the two hypotheses can be described by two quantum channels with the input-output relation
\begin{align}
\Phi_\pm(\hat{\rho}) = \int {\rm d}\bm \zeta\, P(\bm \zeta) \hat{U}_{\bm \zeta} \left[\hat{D}(\pm \epsilon) \hat{\rho} \hat{D}^\dagger(\pm \epsilon)\right] \hat{U}^\dagger_{\bm \zeta}.
\label{Phi_noisy}
\end{align}
In the case of real displacement noise, a single generator $\hat{g}_1=\hat{q}$ is sufficient. In the case of complex displacement noise, the generators are $\{\hat{g}_1,\hat{g}_2\}=\{\hat{q},\hat{p}\}$. 
We are interested in whether the error probability in the above-threshold region will remain close to zero when the noise is small. Quantitatively, we obtain the following theorem (see Appendix~\ref{app:proof_noisy_pe} for a proof).
\begin{theorem}
\label{noisy_theorem}
For $\epsilon>\epsilon_{\rm th}$ past the threshold of zero error, in the weak noise limit of equal-prior binary displacement discrimination between quantum channels $\Phi_\pm$ in Eq.~\eqref{Phi_noisy}, the minimum error probability
\be 
P_{\rm E}^\star(\calD, N_S)\le \sum_{i, j} \operatorname{Cov}(\zeta_i, \zeta_j) \operatorname{Cov}(\hat{g}_i, \hat{g}_j  ),
\label{pe_varg_general}
\ee 
where $\operatorname{Cov}(\zeta_i,\zeta_j)=\int {\rm d}\bm\zeta\, P(\bm\zeta) \zeta_i\zeta_j$ is the covariance of the noise distribution and $\operatorname{Cov}(\hat{g}_i, \hat{g}_j  )$ is the average covariance $\expval{\frac{1}{2}\{\hat{g}_i ,\hat{g}_j\}}-\expval{\hat{g}_i}\expval{\hat{g}_j}$ over the optimal state in the ideal noiseless case.
In the case of single generator $\hat{g}$, Ineq.~\eqref{pe_varg_general} reduces to
\be 
P_{\rm E}^\star(\calD, N_S)\le \sigma^2_{\rm noise}{\rm Var}(\hat{g}) ,
\label{pe_varg}
\ee 
where $\sigma^2_{\rm noise}=\int {\rm d}\zeta\, P(\zeta) \zeta^2$ is the variance of the zero-mean noise distribution and ${\rm Var}(\hat{g})$ is the average variance of $\hat{g}$ for the optimal state in the ideal noiseless case.
\end{theorem}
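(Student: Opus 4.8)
The plan is to bound $P_{\rm E}^\star$ by the error of a single explicit strategy, since the optimum over probes and measurements can only be smaller. Because $\epsilon>\epsilon_{\rm th}$, the noiseless-optimal probe $\ket{\psi}$ sends the two hypotheses to \emph{orthogonal} pure states $\ket{\psi_\pm}=\hat{D}(\pm\epsilon)\ket{\psi}$, $\braket{\psi_+}{\psi_-}=0$. I feed this same $\ket{\psi}$ into the noisy channels $\Phi_\pm$ of Eq.~\eqref{Phi_noisy} and read out with the three-outcome projective POVM $\{\Pi_+,\Pi_-,\Pi_0\}$, $\Pi_\pm=\state{\psi_\pm}$ and $\Pi_0=\hat{I}-\Pi_+-\Pi_-$, declaring the label on $\Pi_\pm$ and guessing at random on $\Pi_0$. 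For equal priors this gives the upper bound $P_{\rm E}^\star\le\tfrac12\big[\operatorname{Tr}(\Pi_-\rho_+)+\operatorname{Tr}(\Pi_+\rho_-)+\tfrac12\operatorname{Tr}(\Pi_0(\rho_++\rho_-))\big]$ with $\rho_\pm=\Phi_\pm(\state{\psi})$.

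Next I expand in the weak noise. Writing $\hat{U}_{\bm\zeta}=\hat{I}-i(\bm\zeta\cdot\hat{\bm g})-\tfrac12(\bm\zeta\cdot\hat{\bm g})^2+\cdots$ and averaging over the zero-mean $P(\bm\zeta)$, the linear terms drop and
\be
\rho_\pm=\state{\psi_\pm}+\sum_{ij}\operatorname{Cov}(\zeta_i,\zeta_j)\left[\hat g_i\state{\psi_\pm}\hat g_j-\tfrac12\left\{\hat g_i\hat g_j,\state{\psi_\pm}\right\}\right]+\cdots,
\ee
a Lindblad-type dissipator that is trace-preserving by the symmetry of $\operatorname{Cov}(\zeta_i,\zeta_j)$. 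Using $\braket{\psi_+}{\psi_-}=0$ to annihilate the zeroth-order off-diagonal pieces, the matrix elements evaluate to $1-\expval{\rho_\pm}{\psi_\pm}=\sum_{ij}\operatorname{Cov}(\zeta_i,\zeta_j)\operatorname{Cov}_\pm(\hat g_i,\hat g_j)$ and $\expval{\rho_\mp}{\psi_\pm}=S$, where $\operatorname{Cov}_\pm$ is the generator covariance evaluated on $\ket{\psi_\pm}$ and $S=\sum_{ij}\operatorname{Cov}(\zeta_i,\zeta_j)\,c_i\bar c_j$ with $c_i=\bra{\psi_+}\hat g_i\ket{\psi_-}$. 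Assembling the POVM outcomes yields
\be
P_{\rm E}^\star\le\tfrac12\sum_{ij}\operatorname{Cov}(\zeta_i,\zeta_j)\,\overline{\operatorname{Cov}}(\hat g_i,\hat g_j)+\tfrac12 S,
\ee
where $\overline{\operatorname{Cov}}=\tfrac12(\operatorname{Cov}_++\operatorname{Cov}_-)$ is the covariance averaged over the two hypotheses, i.e. the $\operatorname{Cov}(\hat g_i,\hat g_j)$ of the statement.

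The crux is to absorb the cross term $S$, which supplies the missing factor of two. I will prove the Hermitian matrix inequality $\overline{\operatorname{Cov}}(\hat g_i,\hat g_j)\succeq C_{ij}:=c_i\bar c_j$ (rank one). Introducing $\ket{u_i}=(\hat g_i-\expval{\hat g_i}{\psi_-})\ket{\psi_-}$, the covariance on $\ket{\psi_-}$ is the Gram matrix $\braket{u_i}{u_j}$, while $c_i=\braket{\psi_+}{u_i}$ because $\braket{\psi_+}{\psi_-}=0$; the elementary bound $|\braket{\psi_+}{V}|^2\le\|V\|^2$ for $V=\sum_i\bar w_i\ket{u_i}$ then gives $\operatorname{Cov}_-\succeq C$, and the same argument with $+\leftrightarrow-$ gives $\operatorname{Cov}_+\succeq C$, hence $\overline{\operatorname{Cov}}\succeq C$. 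Since the real symmetric noise matrix $\operatorname{Cov}(\zeta_i,\zeta_j)$ is positive semidefinite, its entrywise pairing with the positive-semidefinite Hermitian matrix $\overline{\operatorname{Cov}}-C$ is non-negative (decompose the noise matrix into real rank-one terms), so $S\le\sum_{ij}\operatorname{Cov}(\zeta_i,\zeta_j)\overline{\operatorname{Cov}}(\hat g_i,\hat g_j)$. Substituting this into the previous display collapses the two halves and reproduces Ineq.~\eqref{pe_varg_general}. The single-generator case is the specialization $\hat{\bm g}=\{\hat g\}$, where the inequality reduces to $|\bra{\psi_+}\hat g\ket{\psi_-}|^2\le\operatorname{Var}_\pm(\hat g)$ and the bound becomes Ineq.~\eqref{pe_varg}.

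I expect the main obstacle to be exactly this last operator-inequality step: organizing the second-order matrix elements so that the stray cross term $S$ is manifestly dominated by the averaged generator covariance, and recognizing that the factor of two in the statement is precisely what allows $S$ to be reabsorbed. The perturbative bookkeeping and the assembly of the POVM outcomes are routine once the orthogonality $\braket{\psi_+}{\psi_-}=0$ is used systematically.
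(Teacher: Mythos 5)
Your proposal is correct in its conclusion and achievability mechanism, but it takes a genuinely different and more elaborate route than the paper. The paper's proof is far shorter: it keeps the noiseless-optimal probe and simply reuses the noiseless measurement, so the achievable error reduces to evaluating the two survival probabilities $\int {\rm d}\bm\zeta\, P(\bm\zeta)\,\lvert\bra{\psi_k}\hat{U}_{\bm\zeta}\ket{\psi_k}\rvert^2$ for $k=\pm$; a second-order expansion of the amplitude $\bra{\psi_k}\hat{U}_{\bm\zeta}\ket{\psi_k}$ gives $1-\operatorname{Var}_k(\bm\zeta\cdot\hat{\bm g})$, and averaging over the noise and over $k$ yields Ineq.~\eqref{pe_varg_general} directly---a cross term like your $S$ never appears, because any outcome outside $\Pi_k$ is simply counted as an error under hypothesis $k$. (The paper is in fact cavalier about POVM completeness, asserting $\Pi_\pm=\state{\psi_\pm}$; your explicit inconclusive element $\Pi_0=\hat{I}-\Pi_+-\Pi_-$ with a coin flip repairs exactly that, and your intermediate expression $\frac12\sum_{ij}\operatorname{Cov}(\zeta_i,\zeta_j)\overline{\operatorname{Cov}}(\hat g_i,\hat g_j)+\frac12 S$ is pointwise at least as tight as the paper's bound.) The price of the tighter strategy is that $S$ must then be absorbed, which is where all of your extra work---the Gram-matrix/Cauchy--Schwarz lemma---goes; the paper's cruder accounting avoids this step entirely while landing on the same final inequality.

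One caveat on that lemma. As stated, the Hermitian matrix inequality $\operatorname{Cov}_\pm\succeq C$ over all \emph{complex} test vectors does not follow from your Cauchy--Schwarz step, and it can fail when the generators do not commute. With $\ket{u_i}=(\hat g_i-\expval{\hat g_i})\ket{\psi_-}$ and $V=\sum_i\bar w_i\ket{u_i}$, Cauchy--Schwarz bounds $w^\dagger C w$ by $\lVert V\rVert^2=w^\dagger G^{T}w$, where $G_{ij}=\braket{u_i}{u_j}$ is the full, non-symmetrized covariance; the $\operatorname{Cov}_-$ of the theorem is only its real (symmetrized) part, and the discarded imaginary antisymmetric piece, e.g.\ $\expval{[\hat q,\hat p]}/2i=1/2$ for generators $\{\hat q,\hat p\}$, can tilt the spectrum. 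Concretely, for orthogonal states $\ket{\psi_-}=\ket{0}$, $\ket{\psi_+}=\ket{1}$ and generators $\{\hat q,\hat p\}$, one finds $C$ has eigenvalue $1$ while $\operatorname{Cov}_-=I/2$, so $C\not\preceq\operatorname{Cov}_-$. Your proof nevertheless survives, for precisely the reason you note parenthetically: the noise matrix $\operatorname{Cov}(\zeta_i,\zeta_j)$ is real, symmetric and positive semidefinite, so the entrywise pairing only ever tests \emph{real} vectors $w$, and for real $w$ the antisymmetric part of $G$ drops out, giving $w^T C w=\lvert\braket{\psi_+}{V_w}\rvert^2\le \lVert V_w\rVert^2 = w^T G w = w^T \operatorname{Cov}_- w$ with $V_w=\sum_i w_i\ket{u_i}$. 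Restate the lemma for real test vectors---equivalently as $\frac12(C+C^T)\preceq\operatorname{Cov}_\pm$---and the argument is airtight.
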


\begin{figure}[t]
    \centering
    \includegraphics[width=\linewidth]{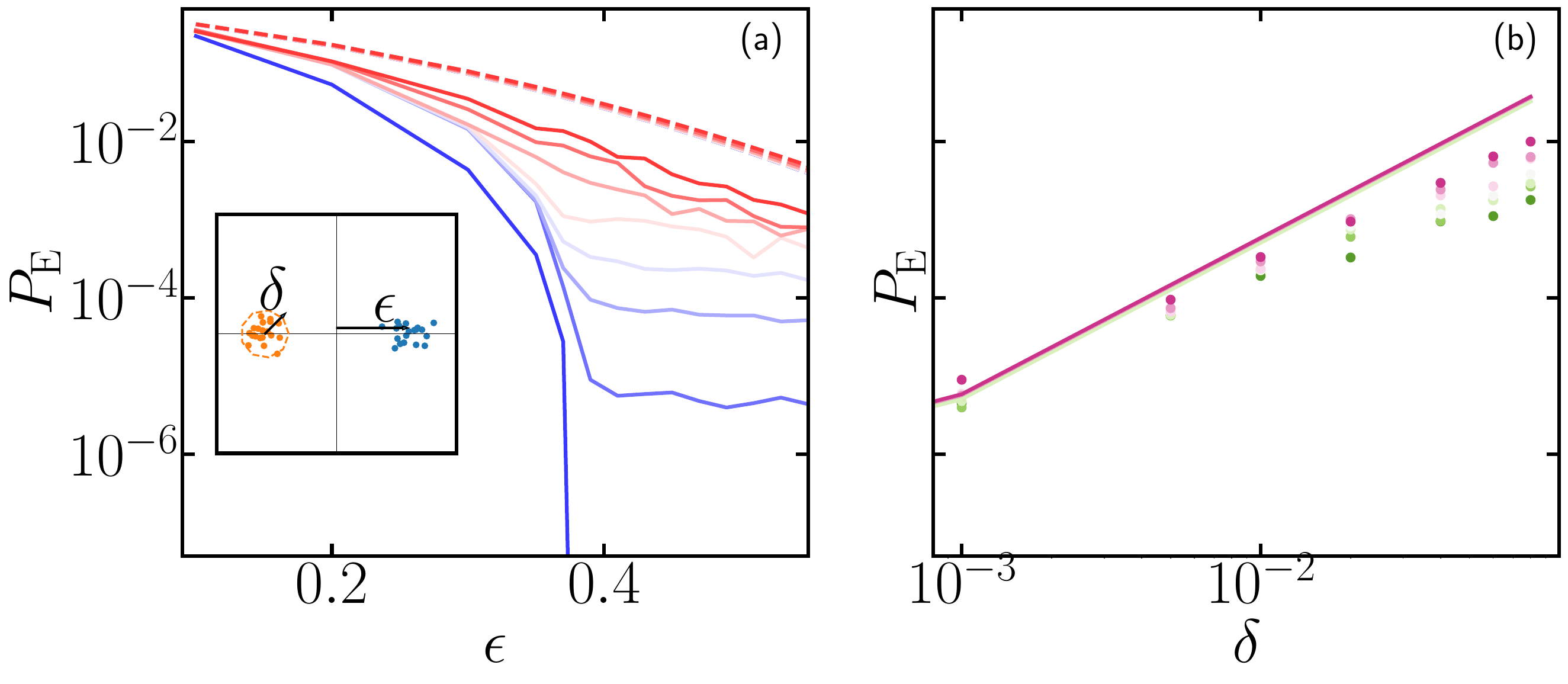}
    \caption{Error probability vs displacement amplitude $\epsilon$ and distribution standard deviation $\delta$ in the one-dimensional linearly separable case, where the energy constraint is $N_S=1$. The training data is sampled from 2D Gaussian distribution with mean $(\pm \epsilon,0)$ (class 1/0), variance $\delta^2$ and zero covariance. The $\delta$ values in (a) are $\{0, 0.001, 0.005, 0.01, 0.02, 0.04, 0.06, 0.08\}$ (from blue to red) and the $\epsilon$ values in (b) are $\{0.39, 0.41, 0.43, 0.45, 0.47, 0.49, 0.51\}$ (from purple to green). In (a), the solid and dashed lines correspond to the performance of the universal VQC and the Gaussian protocol.
    In (b), the solid lines are the upper bounds in Eq.~\eqref{pe_varg_general}, which are highly overlapped in the figure.}
    \label{fig:1d_linear}
\end{figure}

As mentioned earlier, a common noise in the class of Eq.~\eqref{Phi_noisy} is the additive Gaussian noise with generators $\{\hat{q},\hat{p}\}$. Additive noise comes from excess thermal noise and amplification of loss in bosonic systems~\cite{brady2023GKPrvw}. In addition, additive noise can come from uncertainty of the prior information, as exact knowledge of the two hypotheses is often not possible. In other cases, one can consider phase noise, where the generator is the number operator $\hat{n}$. From Theorem~\ref{noisy_theorem}, we also provide the advantage compared to Gaussian strategies in the presence of noise.
For other strategies (e.g. Gaussian strategies) that do not possess the threshold phenomenon, error probability can be decreased by performing $R$ repetitions of experiments if the signal is static, i.e., $P_{\rm E} \propto \exp(-R c)$ for some constant $c$. In this case, then the number of repetitions in such a strategy to match up with the noisy performance in Eq.~\eqref{pe_varg} is $\propto \ln(\sigma^2_{\rm noise})$, which can be large when the noise is low. Moreover, for transient signals, such repetitions will not be possible and the error probability gap indicated in Fig.~\ref{fig:binary_n=1} remains huge in that case. 

Now we consider an example to compare the upper bound in Theorem~\ref{noisy_theorem} and the achievable numerical results for noisy binary displacement discrimination. For a given value of $\epsilon$, we consider different noise displacement distributions---complex Gaussian distribution with variance $\delta^2$ on each quadrature. Therefore, the distribution of the actual amplitude of displacements are the noisy version of $\pm \epsilon $ as indicated in Fig.~\ref{fig:1d_linear}(a) inset. We train the VQC under the same energy constraint in the presence of the random noise and observe the increase of the error probability from zero to finite values, as expected (see Fig.~\ref{fig:1d_linear}). The advantage over the Gaussian scheme survives in all noise levels evaluated, as high as noise over signal strength $\delta/\epsilon\sim 16\%$. To compare with the error probability upper bound in Eq.~\eqref{pe_varg_general}, we plot the error probability versus the noise standard deviation $\delta$. To evaluate Eq.~\eqref{pe_varg_general}, we take the optimal state in the ideal case from the VQC training in Fig.~\ref{fig:binary_n=1} and calculate the variances of quadratures, ${\rm Var}(\hat{q})$ and ${\rm Var}(\hat{p})$ correspondingly ($\expval{\hat{q}\hat{p}}$ is not needed since the noise distribution has zero cross-correlation). The obtained upper bounds agree with the numerical results in terms of the scaling $P_{\rm E}\propto \delta^2$. Moreover, for $\epsilon\sim 0.39$ close to the critical threshold, the upper bound (red solid) agrees well with the actual performance (red dots) and therefore is tight. Also we see the upper bound is generally tighter for smaller $\delta^2$ as expected. At $\epsilon=0.39$ and $\delta=10^{-3}$ (left-most purple dot in Fig.~\ref{fig:1d_linear}(b)), we see a slight violation of the upper bound, because $P_E^\star$ is not achieved in the VQC training due to numerical convergence issue.

In the multi-mode case, when the real displacements are clustered in one direction to be linear separable but spread in all other orthogonal directions, one can transform the displacement data through dimensionality reduction methods into the corresponding one-dimensional classification problems, and thus we expect the threshold phenomenon also applies there.

\section{Nonlinear displacement classification}
\label{sec:nonlinear}

In this section, we extend our study to the classification of displacement data that are not linearly separable. Specifically, we consider two examples that represent real-world applications to demonstrate the power of the proposed VQC framework. 
The first application concerns the fundamental question of the nature of dark matter, which is indicated by astrophysical and cosmological observations and simulations~\cite{aghanim2020planck,massey2010dark,sofue2001rotation}. To test ultralight dark matter candidates such as axion, a major approach is via microwave-cavity detection of excess noise that may be induced by axion-photon coupling in the presence of a magnetic field~\cite{Sikivie:1983ip,girvin2016axdm,malnou2019,dixit2021,backes2021}. The excess noise can be modeled as quadrature displacements of the cavity modes. While a Gaussian distributed displacement model is convenient for theoretical analyses~\cite{shi2023ultimate}, at a fixed frequency, axion signals can be better modeled as a displacement with a fixed amplitude $\epsilon$ and an entirely random phase~\cite{gorecki2022quantum}. In Section~\ref{sec:complex_DM}, we will apply the VQC to discriminate between such a uniform circle of complex displacement versus zero displacement---a hypothesis testing model for dark matter detection.

\begin{figure}[t]
    \centering
    \includegraphics[width=\linewidth]{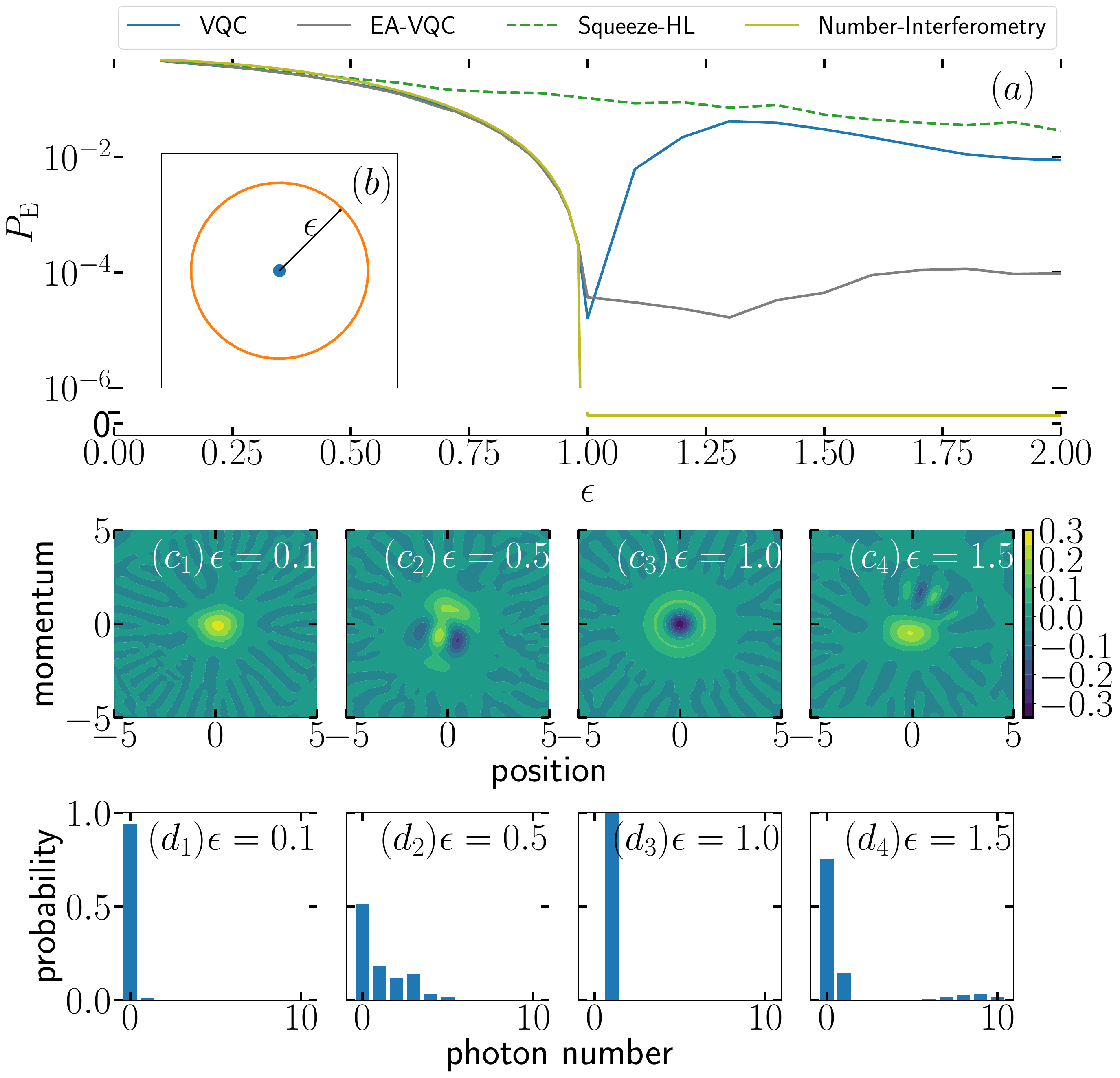}
    \caption{(a) Classification error versus displacement amplitude $\epsilon$ for identity versus circular data under energy constraint $N_S=1$. Performance of VQC on training and testing data agree very well and only the training results are presented here. Here `0' in $P_{\rm E}$ refers to error probability $\lesssim10^{-15}$ reaches the numerical precision. The displacement distributions are shown in $(b)$, in which one class is identity and another class is uniformly distributed circle. Some of the optimal probe states at different $\epsilon$ obtained from VQC are shown in $(c)$ and $(d)$. When $\epsilon$ is small, the optimal state is close to vacuum $(c_1)$ and $(d_1)$; when $\epsilon=1.0$, the optimal state is number state $\ket{1}$ ( $(c_3)$ and $(d_3)$). 
    }
    \label{fig:1d_complex_concentric_circle}
\end{figure}
 
The second application involves transducer-based photonic sensing, including RF-photonic sensors~\cite{xia2020demonstration,ghelfi2014fully} and optomechanical sensors~\cite{xia2023entanglement,liu2021progress}. In these sensors, the signal from various sources modulates the optical coherent state pump, and therefore generates displacements on the side-band. While the displacements are in general complex, here we consider real displacements to model a case where phases are aligned between sensors, as considered in previous works~\cite{xia2020demonstration,xia2023entanglement}. There, Gaussian quantum resources such as squeezing and multipartite Gaussian entanglement are shown to enable quantum advantages over classical sensors. Here, we explore the VQC performance in Section~\ref{sec:RF_app}.

\subsection{One-mode complex displacements: with application in dark matter detection}
\label{sec:complex_DM}

Now we begin with the first example of nonlinear displacement hypothesis testing modeling dark matter detection, where displacement $\alpha$ can be either zero  or uniformly distributed in a circle $|\alpha| = \epsilon$, as shown in Fig.~\ref{fig:1d_complex_concentric_circle} (b). 
In the limit of $\epsilon \to 0$, we expect the optimal state to be the single-mode squeezed vacuum, as hypothesis testing reduces to parameter estimation when $\epsilon\to 0$~\cite{gorecki2022quantum}. As $\epsilon$ increases, a greater separation becomes evident between the VQC schemes (blue and gray solid) and the squeezing-enabled Gaussian scheme (green dashed). At a critical value of $\epsilon_{\rm th}$, we also identify the threshold phenomenon, similar to the linear separable case. Here, the error probability (blue and gray solid) sharply decreases to as low as $10^{-5}$, limited by the convergence precision of VQC training. The optimal state at the threshold $\epsilon_{\rm th}$ turns out to be a number state, as indicated by Fig.~\ref{fig:1d_complex_concentric_circle}(c3)(d3). In the VQC without EA (blue solid), however, we find the error probability increases with increasing $\epsilon$ past the threshold ($\epsilon\ge \epsilon_{\rm th}$). 
Despite the increasing error probability for the VQC without EA (blue solid), it is instructive to consider the optimal states past the threshold. As indicated by the Wigner functions and photon number distributions in Fig.~\ref{fig:1d_complex_concentric_circle}(c4)(d4), a superposition between two photon number states emerges as the optimal state. Indeed, we can evaluate the Helstrom error probability limit of the `0N' state
$
    \ket{\rm 0N} \propto\left( w\ket{0} + \ket{N}\right)
$
which agrees well with the VQC results (blue solid) for large $\epsilon$ when $N$ and $w$ are optimized under the energy constraint. As indicated by Corollary~\ref{theorem:EA_threshold}, EA-VQC is generally required to support the threshold phenomenon, as observed by the gray solid line, where the error probability of the EA-VQC stays around $10^{-5}$. The confirmed separation manifests the importance of entanglement-assistance in general displacement sensing problems.

Inspired by the optimality of photon number state and ON state obtained from VQC training, we design a number-interferometry protocol to achieve close-to-optimal performance. As indicated in Lemma~\ref{theorem:pe}, to support the threshold phenomenon, we can introduce an ancilla and a beamsplitter to effectively reduce the case with $\epsilon>\epsilon_{\rm th}$ to the threshold case. As shown in Fig.~\ref{fig:Number_state_interferometry_Scheme}, a photon number state is interfered with a vacuum to probe the displacement and then an inverse of the initial beamsplitter reverts one output port to the initial number state when the displacement is zero. Finally, photon counting is performed on the corresponding output port to determine the hypothesis.
\begin{figure}
    \centering
    \includegraphics[width=1.0\linewidth]{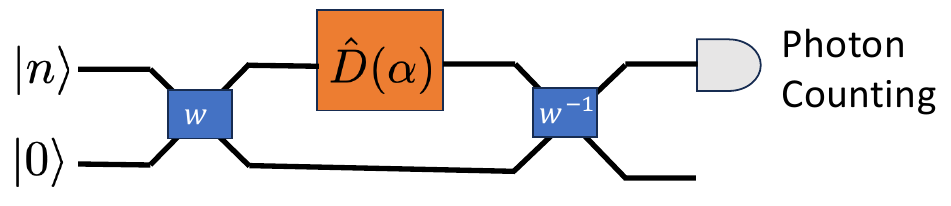}
    \caption{Schematic of number state interferometry.}
    \label{fig:Number_state_interferometry_Scheme}
\end{figure}
The decision of the hypothesis can be made with a threshold decision strategy, with POVMs $\Pi_0 = \ketbra{n}{n}$ and $\Pi_1 = I- \Pi_0$. 
From the prior of the two hypotheses, the error probability of the number-interferometry protocol can be obtained as
\be
P_{\rm E} = \frac{1}{2}  \int_{|\alpha|=\epsilon}d^2 \alpha  |\expval{n|D(\alpha)|n} |^2,
\ee
where $|\alpha|=\epsilon$ indicates the uniform distribution with fixed amplitude $\epsilon$ representing dark matter present case.
The overlap has analytical solution,
$
\expval{\hat D(\alpha)}{n} = e^{-|\alpha|^2/2} L_n(|\alpha|^2),
$
where $L_n(x)$ is the Laguerre polynomials. To achieve zero error probability, we require $\expval{\hat D(\epsilon)}{n}=0$. For energy constraint $n = N_S=1$, $L_1(x) = 1-x$ has root 1, indicating the zero probability at $\epsilon=1$. This explains the threshold value in Fig.~\ref{fig:1d_complex_concentric_circle}. 

As the zeros of the Laguerre polynomial $L_n(x)$ are always positive, we only need to focus on the smallest zero $x_1^{(n)}$, which is found to be bounded by $1/n \le x_1^{(n)}\le 2/(n+1)$~\cite{skovgaard1951greatest}. 
Therefore, for the given probe state $\ket{n}$ combined with the necessary EA strategy, as shown in Fig.~\ref{fig:Number_state_interferometry_Scheme}, one can achieve zero error probability with $\epsilon \sim \Omega(1/\sqrt{n})$; In other words, for given prior uniform circular distribution of $\alpha$ with radius $\epsilon$, one needs the probe number state satisfying $n = \lceil 2/\epsilon^2-1 \rceil$ to enable the zero-error threshold phenomenon. We evaluate the performance of the number interferometry protocol as the green line in Fig.~\ref{fig:1d_complex_concentric_circle}(a), which indeed has threshold at $\epsilon_{\rm th}=1$ since $N_S=1$. Moreover, it also agrees well with the EA-VQC training result, though VQC results is slightly better than number interferometry before the critical threshold.

The goal of our VQC is to assist the design of state preparation and measurement to enable the best performance. In this example, we are able to distill from the VQC results and design the near-optimal number state interferometry scheme. However, in general hypothesis testing problems, coming up with an intuitive design is challenging and VQC provides a versatile tool for such problems. For example, when applying the number-state interferometry scheme to the simple binary displacement testing problem in Fig.~\ref{fig:binary_n=1}, the performance is much worse than VQC results.

\subsection{Real multi-dimensional displacement: with application in RF sensing}
\label{sec:RF_app}

\begin{figure}[t]
    \centering
    \includegraphics[width=\linewidth]{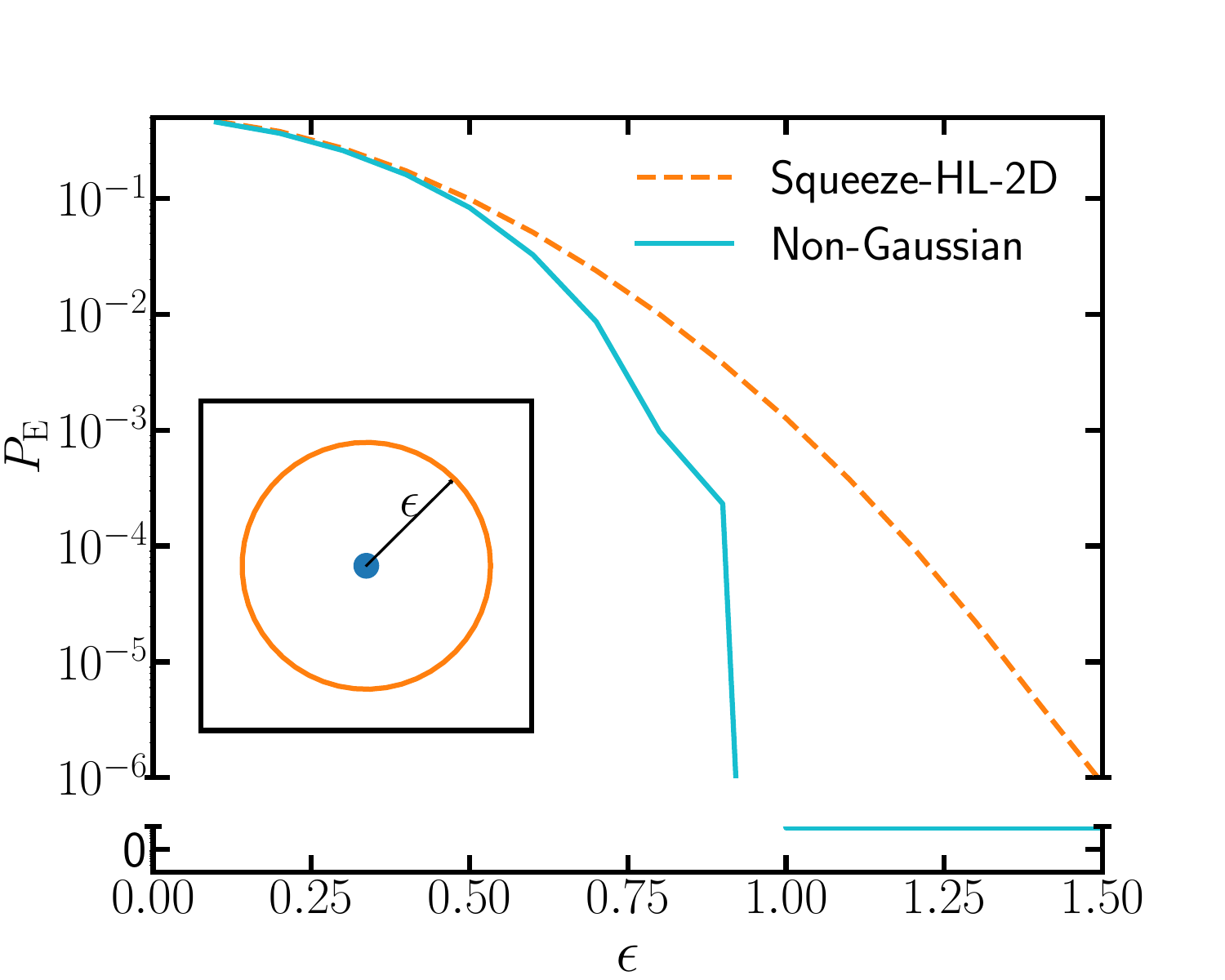}
    \caption{Error probability vs displacement amplitude $\epsilon$ in the two-dimensional linearly non-separable case, where the energy constraint is $N_S=1.0$. Here `0' in $P_{\rm E}$ refers to error probability $\lesssim10^{-15}$ reaches the numerical precision. The inset shows the data shape. The orange dashed curve is the Helstrom limit with squeezed vacuum input and the solid cyan curve is the non-Gaussian error probability enabled by VQC and number-interferometry.}
    \label{fig:2d_nonlinear}
\end{figure}

Now we study the classification error for nonlinear separable multi-dimensional real displacement, which is often encountered in RF sensing as described by Ref.~\cite{xia2020demonstration}. Limited by the numerical simulation power, we consider a two-dimensional case, while larger system size can be considered in the future with on-device training. For time-dependent RF signals, at each fixed time $t$, we consider data as
\begin{align}
    \left\{ \begin{array}{ll}
            x_1(t)  & = A_1 \cos(\omega t+\phi),\\
            x_2(t) &= A_2 \cos(\omega t +\phi+ \Delta\phi), 
    \end{array} \right.
\end{align}
where $\omega$ is the angular frequency of the RF signal, $\phi$ is a phase factor, $\Delta \phi$ is the relative phase, and $A_1$ and $A_2$ are the amplitudes determined by pump power and RF signal strength. 
When both the phase and relative phase are fixed, the time trajectory of the above two-dimensional signal typically manifests as an ellipse, which simplifies to a circle when $A_1$ equals $A_2$. When the phase $\phi$ of the RF signal is random, e.g. due to phase noise induced by path fluctuation or turbulence of RF signal propagation, the data of the RF signal becomes an ensemble of displacements distributed along the path. In this study, we focus on distinguishing between two hypotheses of the phase-random RF signal: the origin ($A_1 = A_2 = 0$) and a larger circle encompassing it ($A_1 = A_2 = \epsilon$), akin to prior investigations in one-dimensional complex systems. 

Illustrated in Fig. \ref{fig:2d_nonlinear}, a threshold phenomenon in the error probability is again evident for the non-Gaussian scheme (combining VQC and number-interferometry), where the error probability diminishes to zero when $\epsilon\ge \epsilon_{\rm th}$.
Thanks to the data reduction technique introduced prior to Lemma~\ref{theorem:pe}, in Appendix~\ref{app:data_reduction} we show that the two-dimensional real data classification problem can be reduced to a one-dimensional complex classification task via the SUM gate. Consequently, the results derived from the one-dimensional complex scenario are readily applicable to the two-dimensional real case, with appropriate adjustments to the energy constraints. For instance, the number-interferometry technique remains applicable, which leads to the threshold phenomenon in Fig.~\ref{fig:2d_nonlinear}.
The Non-Gaussian curve showcases the minimal error achieved by both the two-mode VQC and the number-interferometry technique. These approaches leverage non-Gaussian properties and outperform Gaussian schemes, such as the one employing squeezed states as input, as depicted in the figure.

\section{Conclusions and discussions}
\label{sec:conclusion}

Our work generalizes previous works of supervised learning assisted by an entangled sensor network (SLAEN)~\cite{zhuang2019physical,xia2021quantum} to the general controllable case and reveals an error-probability threshold phenomenon. We establish the fundamental theory to support the threshold phenomenon and asymptotic analyses to predict the threshold and noise robustness. 
Our findings provide a general method for resolving bosonic sensor design problem and hold implications for various quantum sensing scenarios, including optical phase sensing~\cite{guo2020distributed}, radiofrequency photonic sensors~\cite{xia2020demonstration}, optomechanical sensors~\cite{xia2023entanglement}, and microwave dark matter haloscopes~\cite{backes2021quantum,shi2023ultimate,brady2022entangled}. As future directions, our framework can also be applied to optimize measurement design for decoding coherent states in classical optical communication. While previous works~\cite{becerra2013experimental,cui2022quantum} consider adaptive photon counting schemes, our universal control scheme provides a much more powerful approach.
Finally, we discuss connections to some related works. Besides Gaussian SLAEN~\cite{zhuang2019physical,xia2021quantum}, variational quantum sensing has been a focus of various recent works~\cite{kaubruegger2019variational,meyer2021variational,kaubruegger2023optimal} with qubit-based systems and parameter estimation tasks. Refs.~\cite{sinanan2023single,oh2024entanglement} also considers bosonic systems; however, there the threshold phenomenon is not identified.

\begin{acknowledgements}
QZ thanks for discussions with Roni Harnik, Hsin-Yuan Huang, Liang Jiang and Konrad Lehnert. The project is supported by Office of Naval Research Grant No. N00014-23-1-2296, National Science Foundation OMA-2326746, National Science Foundation CAREER Award CCF-2240641 and National Science Foundation Engineering Research Center for Quantum Networks Grant No. 1941583. 
\end{acknowledgements}

\appendix

\section{Proof of Theorem~\ref{noisy_theorem}}
\label{app:proof_noisy_pe}

In the ideal binary displacement discrimination case, suppose the initial state is $\ket{\psi}$, then the two states need to be discriminated after the unknown displacement are
\be
\ket{\psi_-} = \hat{D}(-\epsilon)\ket{\psi};  \quad \ket{\psi_+} = \hat{D}(\epsilon)\ket{\psi}.
\ee
Moreover, suppose the POVM for hypothesis testing is $\hat\Pi_-$ and $\hat\Pi_+$ for the $-\epsilon$ and $+\epsilon$ displacements, then the error probability is
\be
P_{\rm E} = 1- \operatorname{Tr}(\ket{\psi_-}\bra{\psi_-}\hat\Pi_-)/2-
\operatorname{Tr}(\ket{\psi_+}\bra{\psi_+}\hat\Pi_+)/2, 
\ee
where we assume equal prior probability.
When $\epsilon\ge \epsilon_{\text{th}}$, the error probability is zero, indicating
\be
\operatorname{Tr}(\ket{\psi_-}\bra{\psi_-}\Pi_-) = \operatorname{Tr}(\ket{\psi_+}\bra{\psi_+}\Pi_+) = 1,
\ee
which leads to $\ket{\psi_-}\bra{\psi_-}=\Pi_-$ and $\ket{\psi_+}\bra{\psi_+}=\Pi_+$ . 

Consider the noisy quantum channels in Eq.~\eqref{Phi_noisy}, we have the achievable error probability
\begin{align}
P_{\rm E}&=1-\frac{1}{2}\sum_{k=\pm }\operatorname{Tr}\left\{\Pi_k\int {\rm d}\bm \zeta\, P(\bm \zeta) \hat{U}_{\bm \zeta} \state{\psi_k} \hat{U}^\dagger_{\bm \zeta}\right\}
\\
&=1-\frac{1}{2}\sum_{k=\pm } \int {\rm d}\bm \zeta\, P(\bm \zeta)\operatorname{Tr}\left\{\Pi_k \hat{U}_{\bm \zeta} \state{\psi_k} \hat{U}^\dagger_{\bm \zeta}\right\}.
\label{pe_full}
\end{align}
We take the $\zeta\ll1$ limit and expand  $U_{\bm{\zeta}} = e^{-i \bm{\hat{g}} \cdot\bm{\zeta} }$, where  $\bm{\hat{g}} \cdot\bm{\zeta} =  \sum_i \hat{g}_i \zeta_i$,
to its second-order approximation
\begin{align}
U_{\bm \zeta} \approx \hat{I}-i \hat{\bm{g}} \cdot\bm{\zeta} -( \hat{\bm g} \cdot\bm{\zeta})^2/2.   
\end{align}
As a result, the trace term in Eq.~\eqref{pe_full} can be expanded as
\begin{align}
&\operatorname{Tr}\left(\Pi_k \hat{U} \state{\psi_k} U^\dagger\right)\\
\approx&1+  \left(  \expval{ \psi_k \left| \sum_i \hat{g}_i \zeta_i \right| \psi_k }^2  - \expval{ \psi_k \left|  \left(\sum_i \hat{g}_i \zeta_i\right)^2 \right| \psi_k } \right)\\
 =& 1+\sum_i \zeta_i^2  \left( \braket{\psi_k |\hat{g}_i}{\psi_k}^2  - \braket{\psi_k |\hat{g}_i^2}{\psi_k}\right)\nonumber\\
 +& \sum_{i<j} 2 \zeta_i\zeta_j \left( \braket{\psi_k |\hat{g}_i}{\psi_k} \braket{\psi_k|\hat{g}_j}{\psi_k} - \frac{1}{2} \braket{\psi_k|\{\hat{g}_i,\hat{g}_j\}}{\psi_k}  \right)\\
 =& 1- \sum_i \zeta_i^2  \operatorname{Var}_k (\hat{g}_i) - \sum_{i<j} 2\zeta_i\zeta_j \operatorname{Cov}_k(\hat{g}_i,\hat{g}_j),
\end{align}
where ${\rm Var}_k(\hat{g}_i) = \expval{\psi_k|\hat{g}_i^2|\psi_k} - \expval{\psi_k|\hat{g}_i|\psi_k}^2$ is the variance of $
\ket{\psi_k}$ associated with $\hat{g}_i$, and $\operatorname{Cov}_k(\hat{g}_i, \hat{g}_j  ) = \expval{\frac{1}{2}\{\hat{g}_i ,\hat{g}_j\}}-\expval{\hat{g}_i}\expval{\hat{g}_j}$ is the covariance between $\hat{g}_i, \hat{g}_j$ for the state $\ket{\psi_k}$.

Assume the distribution $P(\bm \zeta)$ of noise $U_{\bm{\zeta}}$ has zero mean $\overline{\bm{\zeta}} = \bm{0}$, we can input the above into Eq.~\eqref{pe_full} to obtain the error probability 
\begin{align}
P_{\rm E}&
=\frac{1}{2} \sum_{k=\pm} \int {\rm d}\bm{\zeta}  \, P(\bm{\zeta} ) \left\{ \sum_i \zeta_i^2  \operatorname{Var}_k (\hat{g}_i) \right. \nonumber\\
 &\left. \quad\quad\quad\quad\quad\quad + \sum_{i<j} 2\zeta_i\zeta_j \operatorname{Cov}_k(\hat{g}_i,\hat{g}_j) \right\}\\
&=\frac{1}{2}\sum_{k=\pm} \left( \sum_i \sigma^2_{\zeta_i } \operatorname{Var}_k (\hat{g}_i) + \sum_{i<j} 2 \overline{\zeta_i\zeta_j} \operatorname{Cov}_k(\hat{g}_i,\hat{g}_j)  \right)
\\
&=\sum_{i}  \sigma^2_{\zeta_i}  \operatorname{Var}(\hat{g}_i) +2\sum_{i< j} \operatorname{Cov}(\zeta_i,\zeta_j) \operatorname{Cov}(\hat{g}_i, \hat{g}_j  ),
\\
 &= \sum_{i, j} \operatorname{Cov}(\zeta_i,\zeta_j) \operatorname{Cov}(\hat{g}_i, \hat{g}_j  ), 
\end{align}
leads to Eq.~\eqref{pe_varg_general}. Here $\sigma^2_{\zeta_i}$ is the variance of $\zeta_i$, and ${\rm Var}(\hat{g}_i) = 1/2\sum_{k=\pm } \operatorname{Var}_k (\hat{g}_i)$ and ${\rm Cov}(\hat{g}_i, \hat{g}_j) = 1/2\sum_{k=\pm } \operatorname{Cov}_k (\hat{g}_i, \hat{g}_j)$ are the average of variance and covariance for $\hat{g}_i$ over hypothesis state $\ket{\psi_\pm}$.

\section{Gaussian Transformation of Data}
\label{app:Gaussian_transform}
\label{app:proof_theorem1}
\label{app:data_reduction}

\begin{theorem}
\label{the:app}
 Denote $P^\star_{\rm E}(P_{XY},N_S)$ as the optimal error probability in the discrimination between multiple ensembles of $M_1$-mode displacement operators as specified in Eq.~\eqref{PE_opt}. 
Consider any $2(M_1+M_2)$-by-$2(M_1+M_2)$ symplectic matrix $\bm{S} = \begin{pmatrix}
 \bm{S}_{11} & \bm S_{12} \\\bm{S}_{21} & \bm{S}_{22}   
\end{pmatrix}$, where $\bm{S}_{11}\in \mathbb{R}^{2M_1\times 2M_1}$, $\bm{S}_{12}\in \mathbb{R}^{2M_1\times 2M_2}$, $\bm{S}_{21}\in \mathbb{R}^{2M_2\times 2M_1}$ and $\bm{S}_{22}\in \mathbb{R}^{2M_2\times 2M_2}$. Consider a new set of data distribution $P'_{XY}$ and energy constraint $N_S'$,
\begin{align}
P_{XY}^\prime (\bm{x},y) = \frac{1}{|\bm S_{22}|} P_{XY} \left[\left(\bm S_{11} - \bm S_{12}\bm S_{22}^{-1} \bm S_{21}\right)\bm x,y\right].
\label{PE_transform_full}
\end{align}
Then we have
\be
P^\star_{\rm E}(P_{XY},N_S)\le P^\star_{\rm E}(P_{XY}^\prime,N_S^\prime),
\label{general_ineq}
\ee   
where the energy
\begin{subequations}
\label{eq:energy_constraint}
\begin{align}
&N_S = \frac{1}{2}\Tr(\bm S_{11} \bm V' \bm S_{11}^T+\frac{1}{2}\bm S_{12}\bm S_{12}^T-M_1),
\\
&N_S^\prime =\frac{1}{2}\Tr(\bm V'-M_1),
\end{align}
\end{subequations}
are function of the covariance matrix $\bm V'$ of the optimal probe state for the problem $P^\star_{\rm E}(P_{XY}^\prime,N_S^\prime)$.
\end{theorem}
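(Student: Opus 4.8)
The plan is to prove the inequality by an explicit reduction that lifts any near-optimal strategy for the transformed task $(P_{XY}',N_S')$ into a feasible strategy for the original task $(P_{XY},N_S)$ achieving the same error, so that the original optimum can only be lower. The entire construction rests on absorbing the symplectic matrix $\bm S$ into the probe preparation and, conjugately, into the measurement, exploiting the identity $\hat U_{\bm S}^\dagger \hat D(\bm\zeta)\hat U_{\bm S} = \hat D(\bm S^{-1}\bm\zeta)$ stated just before Lemma~\ref{theorem:pe}. Since Gaussian unitaries are a subset of the universal control allowed for $\hat U(\bm\theta_p)$ and $\hat V(\bm\theta_m)$, both $\hat U_{\bm S}$ and $\hat U_{\bm S}^\dagger$ are admissible, which makes the lifted strategy legitimate.

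Concretely, I would let $\ket{\psi'}$ be the optimal $M_1$-mode probe for $(P_{XY}',N_S')$ with covariance matrix $\bm V'$, and form the $(M_1+M_2)$-mode probe $\ket{\Psi}=\hat U_{\bm S}\big(\ket{\psi'}\otimes\ket{\mathrm{vac}}_{M_2}\big)$ by appending $M_2$ vacuum ancilla modes and applying $\hat U_{\bm S}$. In the original problem the data acts only on the first $M_1$ modes as $\hat D(\bm x,\bm 0)$, so I would fold $\hat U_{\bm S}^\dagger$ into the measurement and use $\hat U_{\bm S}^\dagger \hat D(\bm x,\bm 0)\hat U_{\bm S}\big(\ket{\psi'}\otimes\ket{\mathrm{vac}}\big)=\hat D\big(\bm S^{-1}(\bm x,\bm 0)\big)\big(\ket{\psi'}\otimes\ket{\mathrm{vac}}\big)$. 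The key computation is the block inverse: with the Schur complement $\bm T\equiv \bm S_{11}-\bm S_{12}\bm S_{22}^{-1}\bm S_{21}$ one has $(\bm S^{-1})_{11}=\bm T^{-1}$ and $(\bm S^{-1})_{21}=-\bm S_{22}^{-1}\bm S_{21}\bm T^{-1}$, so the conjugated displacement factorizes into $\hat D(\bm T^{-1}\bm x)$ on the signal modes tensored with $\hat D(-\bm S_{22}^{-1}\bm S_{21}\bm T^{-1}\bm x)$ on the ancilla. Because the input is a product state, the reduced state on the $M_1$ signal modes is exactly $\hat D(\bm T^{-1}\bm x)\ket{\psi'}$, with the (data-dependent) ancilla displacement factoring out and thus discardable at no cost.

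The next step is to match the data distributions. Using $\det\bm S=1$, hence $|\bm T|=1/|\bm S_{22}|$, the pushforward of $P_{XY}$ under $\bm x\mapsto\bm T^{-1}\bm x$ has density $|\bm T|\,P_{XY}(\bm T\bm x')=\tfrac{1}{|\bm S_{22}|}P_{XY}\big((\bm S_{11}-\bm S_{12}\bm S_{22}^{-1}\bm S_{21})\bm x'\big)$, which is precisely $P_{XY}'$ in Eq.~\eqref{PE_transform_full}. Therefore, discarding the ancilla modes and running the transformed task's optimal measurement on the signal modes reproduces that task verbatim and achieves error $P_{\rm E}^\star(P_{XY}',N_S')$; as this is one feasible original-problem strategy, $P_{\rm E}^\star(P_{XY},N_S)\le P_{\rm E}^\star(P_{XY}',N_S')$.

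Finally I would verify the energy bookkeeping, which is the source of Eq.~\eqref{eq:energy_constraint}. Propagating the block-diagonal input covariance $\mathrm{diag}(\bm V',\tfrac12\bm I_{2M_2})$ through $\bm S$ gives output covariance $\bm S\,\mathrm{diag}(\bm V',\tfrac12\bm I)\,\bm S^T$, whose signal block is $\bm S_{11}\bm V'\bm S_{11}^T+\tfrac12\bm S_{12}\bm S_{12}^T$; applying $\tfrac12(\mathrm{Tr}(\cdot)-M_1)$ recovers $N_S$, while $N_S'=\tfrac12(\mathrm{Tr}\,\bm V'-M_1)$ is simply the energy of $\ket{\psi'}$. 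I expect the main obstacle to be bookkeeping rather than conceptual: making the Schur-complement block-inverse identities, the Jacobian factor $|\bm S_{22}|$, and the definition of $P_{XY}'$ line up consistently, and confirming that the ancilla modes, though they receive a data-dependent displacement, factor out cleanly so that tracing them out is free.
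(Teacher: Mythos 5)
Your proposal is correct and follows essentially the same route as the paper's proof: append $M_2$ vacuum ancillas, absorb the symplectic unitary $\hat U_{\bm S}$ into state preparation (and its inverse into measurement) via $\hat U_{\bm S}^\dagger \hat D(\bm \zeta)\hat U_{\bm S}=\hat D(\bm S^{-1}\bm\zeta)$, identify the induced data distribution as $P_{XY}^\prime$ using $\det \bm S=1$, lift the optimal strategy of the transformed problem, and track energy through the output covariance block $\bm S_{11}\bm V'\bm S_{11}^T+\frac{1}{2}\bm S_{12}\bm S_{12}^T$. Your Schur-complement block-inverse computation is the same calculation the paper carries out by marginalizing the delta-correlated displacement over the ancilla modes, and your explicit observation that the data-dependent ancilla displacement factors off the product state (so discarding those modes is free) makes precise a step the paper leaves implicit.
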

\begin{figure}[H]
    \centering
    \includegraphics[width=\columnwidth]{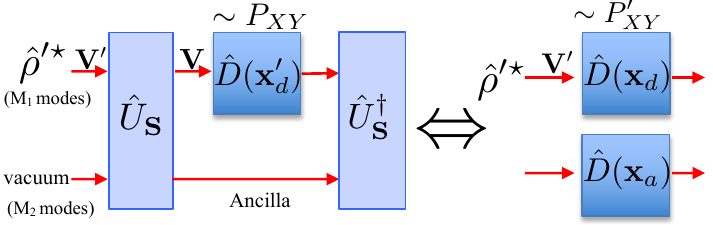}
    \caption{Schematic of the proof of Theorem~\ref{the:app}.
}
    \label{fig: proof_EA}
\end{figure}

\begin{proof}
The proof idea is sketched in Fig.~\ref{fig: proof_EA}.
We first consider introducing $M_2$ vacuum ancilla modes with no displacement acting on them prior to transform. The joint distribution of the displacements on the $(M_1+M_2)$-mode system is
\begin{align}
P_{X^\prime Y}\left(\left\{\bm x_d, \bm x_a\right\}, y\right)=P_{XY}(\bm x_d,y) \delta(\bm x_a),
\end{align}
where $\bm x_d$ is $M_1$-mode displacement on the data and $\bm x_a$ is $M_2$-mode displacement on the ancilla.
Now consider an $(M_1+M_2)$-mode zero-mean Gaussian unitary  $\hat{U}_{\bm{S}}$, which transforms a displacement  $\hat{D}(\bm{\zeta})$ to another displacement~\cite{zhuang2019Scrambling} according to 
\be
U_{\bm{S}}^\dagger \hat{D}(\bm{\zeta}) U_{\bm{S}} =  D (S^{-1} \bm{\zeta}),
\ee    
where $\bm{S} = \begin{pmatrix}
 \bm{S}_{11} & \bm S_{12} \\\bm{S}_{21} & \bm{S}_{22}   
\end{pmatrix}$ has the block form given in the theorem. As shown in Fig.~\ref{fig: proof_EA} left side, via applying the unitary $\hat{U}_{\bm S}$ prior and after the displacement, the data mode has a new distribution
\begin{align}
&P'_{X^\prime Y}\left(\left\{\bm x_d, \bm x_a\right\}, y\right)=P_{X^\prime Y}\left(\bm S\left\{\bm x_d, \bm x_a\right\}, y\right)
\\
&=P_{X^\prime Y}(\{\bm S_{11}\bm x_d+\bm S_{12}\bm x_a, \bm S_{21}\bm x_d+\bm S_{22}\bm x_a\},y) 
\\
&=P_{XY}(\bm S_{11}\bm x_d+\bm S_{12}\bm x_a)\delta(\bm S_{21}\bm x_d+\bm S_{22}\bm x_a).
\end{align}
Note that a symplectic matrix has unity determinant. Now we focus on the displacements on the data mode, which has the marginal distribution
\begin{align}
&P'_{XY}(\bm x_d, y)=\int d^{2M_2} \bm x_a P'_{X^\prime Y}\left(\left\{\bm x_d, \bm x_a\right\}, y\right)
\\
&=\int d^{2M_2} \bm x_a P_{XY}(\bm S_{11}\bm x_d+\bm S_{12}\bm x_a)\delta(\bm S_{21}\bm x_d+\bm S_{22}\bm x_a)
\\
&=\frac{1}{|\bm S_{22}|} P_{XY} (\left(\bm S_{11} - \bm S_{12}\bm S_{22}^{-1} \bm S_{21}\right)\bm x_d,y).
\end{align}

Now suppose we know the optimal input state $\hat{\rho}^\star$ to to the problem $P'_{XY}$ under the energy constraint $N_S^\prime$. We can utilize the probe to obtain an achievable error probability for the distribution $P_{XY}$, as long as we can keep track of the energy change. The input state to the problem $P_{XY}$ is generated by the unitary $\hat{U}_{\bm S}$ applying on the optimal state $\hat\rho^{\prime\star}$ and $M_2$ vacuum ancilla. The input-output relation from $\hat\rho^{\prime\star}$ to the input of $P_{XY}$ forms a Gaussian quantum channel, when tracing out the ancilla. The covariance matrix $\bm V'$ of $\hat\rho^{\prime\star}$ is mapped to the output covariance matrix 
\be 
\bm V= \bm S_{11} \bm V' \bm S_{11}^T+\frac{1}{2}\bm S_{12}\bm S_{12}^T.
\ee 
The corresponding energy
\be 
N_S =\frac{1}{2}\Tr(\bm S_{11} \bm V' \bm S_{11}^T+\frac{1}{2}\bm S_{12}\bm S_{12}^T-M_1),
\ee 
while the original energy 
\be 
N_S^\prime=\frac{1}{2}\Tr(\bm V'-M_1).
\ee 
The achievable error probability for the problem can be lower bounded by the optimal error probability, leading Ineq.~\eqref{general_ineq}.


\end{proof}

We can utilize Theorem~\ref{the:app} to obtain Lemma~\ref{theorem:pe} of the main text as a corollary, which we prove below.

\begin{proof}[Proof of Lemma~\ref{theorem:pe}]
We apply Theorem~\ref{the:app} with $M_1=M_2=M$ modes to complete the proof. 
First, we apply zero-mean symplectic transform pairwise on the $k$-th data mode and $k$-th ancilla mode described by the symplectic matrix,
\be
\bm S_2 = 
\begin{pmatrix}
S_{11} & S_{12}\\  
S_{21} & S_{22}
\end{pmatrix}.
\ee
Here $S_{11}, S_{12}, S_{21}$ and $S_{22}$ are the 2-by-2 block forms.
Then, Eq.~\eqref{PE_transform_full} leads to
\begin{align}
&P'_{XY}(\{\bm x_k\}_{k=1}^M, y)=
\nonumber
\\
&\quad \frac{1}{|S_{22}|^M}P_{XY} \left(\left\{\left(S_{11} - S_{12}S_{22}^{-1} S_{21}\right)\bm x_k\right\}_{k=1}^M,y \right),
\label{eq:2mode-wise}
\end{align}
where we have denoted the data on the $k$-th mode as $\bm x_k$. 
The energy constraints are related via equation similar to Eq.~\eqref{eq:energy_constraint}.

Consider a beamsplitter
$
\bm S_2 = \begin{pmatrix}
\cos(\theta) I_2 & -\sin (\theta)  I_2  \\
\sin(\theta) I_2 & \cos (\theta)  I_2  
\end{pmatrix}.
$
From Eq.~\eqref{eq:2mode-wise}, we have 
\be 
P'_{XY}(\{\bm x_k\}_{k=1}^M, y)=
\frac{1}{\cos^{2M}\theta}P_{XY}\left(\left\{\frac{1}{\cos\theta}\bm x_k\right\}_{k=1}^M, y \right).
\label{PXY,c<1}
\ee 
While one can work out the energy constraints from Eq.~\eqref{eq:energy_constraint}, in this case the Gaussian channel is a mode-wise pure loss channel and we directly have
\be 
N_S^\prime = N_S/|\cos\theta|^2.
\label{NS,c<1}
\ee 
Combining Eq.~\eqref{NS,c<1} and Eq.~\eqref{PXY,c<1}, we have proved the $c<1$ part of the Lemma~\ref{theorem:pe}. (Note that the notation in Lemma~\ref{theorem:pe} has the notations with and without prime exchanged.)

The $c>1$ of Lemma~\ref{theorem:pe} comes from applying a two-mode squeezing operator mode-wise, of which the symplectic map is
$
\bm S_2 = 
\begin{pmatrix}
\cosh r I & \sinh r  Z\\  
\sinh r Z & \cosh r I 
\end{pmatrix}.
$
From Eq.~\eqref{eq:2mode-wise}, we have 
\be 
P'_{XY}(\{\bm x_k\}_{k=1}^M, y)=
\frac{1}{\cosh^{2M}(r)}P_{XY} \left(\left\{\frac{1}{\cosh r }\bm x_k\right\}_{k=1}^M, y \right).
\label{PXY,c>1}
\ee 
While one can work out the energy constraints from Eq.~\eqref{eq:energy_constraint}, in this case the Gaussian channel is a mode-wise quantum-limited amplifier channel and 
\be
N_S= \cosh^2 r N^\prime_S +\cosh^2 r-1.
\label{NS,c>1}
\ee
Combining Eq.~\eqref{NS,c>1} and Eq.~\eqref{PXY,c>1}, we have proved the $c>1$ part of Lemma~\ref{theorem:pe}. (Note that the notation in Lemma~\ref{theorem:pe} has the notations with and without prime exchanged.)

Combining the two cases, Lemma~\ref{theorem:pe} is then proven. 

\end{proof}

\begin{corollary}[Ellipse to circle]
For the following three 2D probability distributions
\begin{align}
P_{XY}(x_1,x_2,y=0) =&P_{XY}^\prime(x_1,x_2,y=0) =  \delta(x_1)\delta(x_2),\\
P_{XY}(x_1,x_2,y=1) =& \frac{1}{C_1} \delta \left( \frac{x_1^2}{a^2}+\frac{x_2^2}{b^2} -1\right),\\
P_{XY}^\prime(x_1,x_2,y=1) =& \frac{1}{C_2} \delta \left( x_1^2+x_2^2 -ab\right)  ,  
\end{align}
we have 
\be
P_{\rm E}^\star(P_{XY}, N_S) \le  P_{\rm E}^\star(P_{XY}^\prime, N_S^\prime ) 
\ee
where
\be
N_S^\prime = N_S - \frac{1}{2} \expval{(a/b-1)\hat{q}^2 + (b/a-1)\hat{p}^2 }.
\ee
\end{corollary}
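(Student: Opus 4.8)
The plan is to obtain this as a direct instance of Theorem~\ref{the:app} in the single-mode setting $M_1=1$ with no ancilla ($M_2=0$), taking the symplectic matrix to be a pure single-mode squeezer that deforms the ellipse into the circle. Concretely, I would choose
\be
\bm{S} = \begin{pmatrix} \sqrt{a/b} & 0 \\ 0 & \sqrt{b/a} \end{pmatrix},
\ee
which is symplectic with unit determinant and consists only of the $\bm S_{11}$ block (the $\bm S_{12},\bm S_{21},\bm S_{22}$ blocks are empty). Squeezing the $x_1$ axis by $\lambda=\sqrt{a/b}$ and compressing $x_2$ by the reciprocal maps the semi-axes $a,b$ onto their common geometric mean $\sqrt{ab}$, so the enclosed area is preserved, as any symplectic map requires.

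The first step is to verify that the data transform of Theorem~\ref{the:app}, which for $M_2=0$ reduces to $P_{XY}^\prime(\bm x,y)=P_{XY}(\bm S\bm x,y)$, carries the ellipse ensemble onto the circle ensemble. Substituting $\bm S\bm x=(\sqrt{a/b}\,x_1,\sqrt{b/a}\,x_2)$ collapses $x_1^2/a^2+x_2^2/b^2$ into $(x_1^2+x_2^2)/ab$, so the $y=1$ delta becomes supported on $x_1^2+x_2^2=ab$, while the $y=0$ point mass at the origin is invariant because the two Jacobian factors $\sqrt{a/b}$ and $\sqrt{b/a}$ cancel. One must then check the normalizations: rescaling the argument of the delta produces an extra factor $ab$, which is exactly absorbed by the ratio $C_1/C_2=ab$ forced by normalizing both ensembles, so the stated $P_{XY}^\prime$ is reproduced with no leftover constant.

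The second step is the energy bookkeeping, where the content of the statement lives. Theorem~\ref{the:app} maps the covariance matrix $\bm V'$ of the optimal probe for the circle problem to $\bm S\bm V'\bm S^{T}$ (with the $\bm S_{12}$ term absent), so the diagonal entries scale as $\expval{\hat q^2}\mapsto(a/b)\expval{\hat q^2}$ and $\expval{\hat p^2}\mapsto(b/a)\expval{\hat p^2}$. Inserting this into the single-mode occupation $N_S=\tfrac12(\expval{\hat q^2}+\expval{\hat p^2}-1)$ gives $N_S-N_S^\prime=\tfrac12[(a/b-1)\expval{\hat q^2}+(b/a-1)\expval{\hat p^2}]$, which is precisely the claimed relation, with the expectation taken over the optimal circle-problem probe $\hat\rho^{\prime\star}$. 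The inequality $P_{\rm E}^\star(P_{XY},N_S)\le P_{\rm E}^\star(P_{XY}^\prime,N_S^\prime)$ then follows immediately from the conclusion of Theorem~\ref{the:app}.

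I expect the main obstacle to be conceptual rather than computational: the energy $N_S^\prime$ is defined \emph{implicitly} through the covariance of the yet-unknown optimal probe for the circle problem, so the statement relates two constrained optima rather than yielding a clean numerical reduction. I would stress that this self-referential form is inherited directly from Theorem~\ref{the:app} and is legitimate there. The only genuine checks are (i) that a pure single-mode squeeze, with no ancilla and hence no added vacuum noise, suffices, so the energy relation carries no extra $\tfrac12\bm S_{12}\bm S_{12}^{T}$ contribution, and (ii) the Jacobian and normalization cancellation noted above. A minor orientation check confirms that the squeezing axis is assigned consistently; the symmetric appearance of $a/b$ and $b/a$ in the final formula shows the result is insensitive to whether $a>b$ or $a<b$.
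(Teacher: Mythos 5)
Your proof is correct and takes essentially the same route as the paper: both apply Theorem~\ref{the:app} with the single-mode squeezing symplectic matrix $\mathrm{diag}\left(\sqrt{a/b},\sqrt{b/a}\right)$, no ancilla modes, and then read the energy relation off the theorem's covariance bookkeeping. The only differences are cosmetic --- the paper's proof says ``$M_1=2$ data modes,'' which is a typo (your $M_1=1$ is the consistent choice for a $2\times 2$ symplectic matrix acting on one mode's $(q,p)$ phase space), and your explicit checks of the delta-function Jacobian, the normalization ratio $C_1/C_2=ab$, and the $\bm S_{12}=0$ energy formula merely fill in details the paper leaves as ``directly follows.''
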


\begin{proof}
In this case, we have $M_1=2$ data modes and $M_2=0$ ancilla modes. We apply Theorem~\ref{the:app} with the single-mode squeezing symplectic matrix
\be
\bm{S} = 
\begin{pmatrix}
 e^{-r} &0\\
 0& e^{r}
\end{pmatrix}
 = 
 \begin{pmatrix}
\sqrt{a/b} & 0\\
 0& \sqrt{b/a}
\end{pmatrix}
\ee
and the result directly follows from Theorem~\ref{the:app}.
\end{proof}

\begin{corollary}[2d real to 1d complex]
For the following 2D probability distribution functions
\begin{align}
P_{XY}(\bm{x}|y=0) =& P_{XY}^\prime(\bm{x}|y=0) =\delta(x_1) \delta(x_2) \delta(x_3)\delta(x_4) ,\\
P_{XY}(\bm{x}|y=1) =& f(x_1,x_3) \delta(x_2)\delta(x_4),\\
P^\prime_{XY}(\bm{x}|y=1) =& f(x_1,x_2) \delta \left(x_2-x_4\right) \delta \left(x_1+x_3\right) ,  
\end{align}
we have 
\be
P_{\rm E}^\star(P_{XY}, N_S) \le  P_{\rm E}^\star(P_{XY}^\prime, N_S^\prime) 
\ee
where
\be
N_S = N_S^\prime + \frac{1}{2} \expval{\hat{q}_1^2+ \hat{p}_2^2+\{\hat{q}_1,\hat{q}_2\}-\{\hat{p}_1,\hat{p}_2\}}.
\ee
\end{corollary}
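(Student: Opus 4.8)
The plan is to read this off as the $M_1=2$, $M_2=0$ specialization of Theorem~\ref{the:app}. With no ancilla modes the block $\bm S_{22}$ is absent, $|\bm S_{22}|=1$, and the Schur complement $\bm S_{11}-\bm S_{12}\bm S_{22}^{-1}\bm S_{21}$ collapses to a single $4\times4$ symplectic matrix $\bm S$. Hence Eq.~\eqref{PE_transform_full} becomes simply $P'_{XY}(\bm x,y)=P_{XY}(\bm S\bm x,y)$, while the energy relations in Eq.~\eqref{eq:energy_constraint} reduce to $N_S=\tfrac12\Tr(\bm S\bm V'\bm S^{T}-M_1)$ and $N_S'=\tfrac12\Tr(\bm V'-M_1)$. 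The entire task is therefore to exhibit one symplectic $\bm S$ that turns the given $P_{XY}$ into the given $P'_{XY}$, and then to evaluate the resulting energy shift.

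To find $\bm S$ I would match the $y=1$ hypotheses, i.e.\ require $f((\bm S\bm x)_1,(\bm S\bm x)_3)\,\delta((\bm S\bm x)_2)\,\delta((\bm S\bm x)_4)=f(x_1,x_2)\,\delta(x_2-x_4)\,\delta(x_1+x_3)$. The hard part is that the naive guess $(\bm S\bm x)_1=x_1$, $(\bm S\bm x)_3=x_2$ cannot be symplectic: writing $\bm r_i$ for the $i$-th row of $\bm S$, those two rows would give $\bm r_1\bm\Omega\bm r_3^{T}=1\neq0$, violating the symplectic condition $\bm S\bm\Omega\bm S^{T}=\bm\Omega$ for the symplectic form $\bm\Omega$. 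The resolution is to exploit the support of the delta functions, on which $x_4=x_2$ and $x_3=-x_1$, to rewrite the arguments of $f$. Taking
\be
\bm S=\begin{pmatrix}1&0&0&0\\0&1&0&-1\\0&0&0&1\\-1&0&-1&0\end{pmatrix},
\ee
so that $\bm S\bm x=(x_1,\ x_2-x_4,\ x_4,\ -x_1-x_3)$, gives $f(x_1,x_4)\,\delta(x_2-x_4)\,\delta(x_1+x_3)$, which equals $f(x_1,x_2)\,\delta(x_2-x_4)\,\delta(x_1+x_3)$ on the support because $x_4=x_2$ there. The $y=0$ hypothesis is preserved because $|\det\bm S|=1$ makes $\delta^{4}(\bm S\bm x)=\delta^{4}(\bm x)$. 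A direct check confirms $\bm S\bm\Omega\bm S^{T}=\bm\Omega$; physically $\bm S$ is a SUM-type gate dressed with a local phase rotation on the second mode.

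With $\bm S$ in hand the remaining bookkeeping is mechanical. Using $\Tr(\bm S\bm V'\bm S^{T})=\Tr(\bm S^{T}\bm S\,\bm V')$, the energy shift is $N_S-N_S'=\tfrac12\Tr[(\bm S^{T}\bm S-I)\bm V']=\tfrac12\expval{\hat{\bm r}^{T}(\bm S^{T}\bm S-I)\hat{\bm r}}$, where $\hat{\bm r}=(\hat q_1,\hat p_1,\hat q_2,\hat p_2)$ and the quadratic form is symmetrized into anticommutators. Evaluating $\bm S^{T}\bm S-I$ leaves only the entries $(1,1)$, $(4,4)$, $(1,3)=(3,1)$, and $(2,4)=(4,2)$, which produce exactly $\hat q_1^2+\hat p_2^2+\{\hat q_1,\hat q_2\}-\{\hat p_1,\hat p_2\}$, yielding the claimed relation $N_S=N_S'+\tfrac12\expval{\hat q_1^2+\hat p_2^2+\{\hat q_1,\hat q_2\}-\{\hat p_1,\hat p_2\}}$. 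The only nonroutine step is the delta-support matching in the second paragraph; everything else follows directly from Theorem~\ref{the:app}.
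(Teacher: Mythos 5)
Your proof is correct and follows essentially the same route as the paper: specialize Theorem~\ref{the:app} to $M_1=2$, $M_2=0$, exhibit an explicit symplectic matrix realizing the data transform, and read off the energy shift from $\tfrac12\Tr[(\bm S^T\bm S-I)\bm V']$. In fact your single matrix is precisely the paper's two-step construction (a phase rotation on mode 2 followed by the SUM gate $e^{-i\hat q_1\hat p_2}$) written as one product, up to the sign of the last row --- and your sign is the more careful choice: the paper's first step is the bare swap $q_2\leftrightarrow p_2$, which is anti-symplectic (it sends the symplectic form on mode 2 to its negative), whereas your last row $(-1,0,-1,0)$ corresponds to the genuine $\pi/2$ rotation $(q_2,p_2)\to(p_2,-q_2)$ and makes $\bm S$ truly symplectic, with the discrepancy absorbed by the evenness of the delta function exactly as you argue. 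The distinction is invisible in the energy bookkeeping, since the last row enters $\bm S^T\bm S$ quadratically, which is why both versions yield the same matrix $\bm S^T\bm S - I$ and hence the stated shift $N_S-N_S'=\tfrac12\expval{\hat q_1^2+\hat p_2^2+\{\hat q_1,\hat q_2\}-\{\hat p_1,\hat p_2\}}$.
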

\begin{proof}
In this case, we $M_1=2$ data modes and $M_2=0$ ancilla modes. We apply Theorem~\ref{the:app} with the symplectic matrix $\bm S_2 \bm S_1$ in two steps.

First, initiate a phase rotation in the second mode, with symplectic transformation matrix denoted by $S$:
\be
\bm{S}_1 = 
\begin{pmatrix} 
1 & 0 & 0 & 0 \\ 
0 & 1 & 0 & 0 \\ 0 & 0 & 0 & 1 \\ 
0 & 0 & 1 & 0 
\end{pmatrix}
\ee
This matrix preserves $P_{XY}(\cdot|y=0)$, while modifying $P_{XY}(\cdot|y=1)$ to:
\be
P_{XY}^{\prime\prime}(\bm{x}|y=1) = f(x_1,x_4) \delta(x_2)\delta(x_3)
\ee
This transformation maintains the energy unaltered.

Subsequently, the SUM gate  $\hat{U}_{\rm SUM} = e^{-i\hat{q}_1\hat{p}_2 }$ is utilized,
of which the symplectic transformation matrix is  
\be
\bm{S}_2 = 
\begin{pmatrix}
1 & 0 & 0 & 0 \\
0 & 1 & 0 & -1 \\
1 & 0 & 1 & 0 \\
0 & 0 & 0 & 1
\end{pmatrix}.
\ee
This gate does not change $P^{\prime\prime}_{XY}(\cdot|y=0)$ and transform $P_{XY}^{\prime\prime}$ to
\begin{align}
P_{XY}^{\prime}(\bm{x}|y=1) 
=&f(x_1,x_4) \delta(x_2-x4)\delta(x_3+x_1)\\
=&f(x_1,x_2) \delta(x_3+x_1)\delta(x_2-x4).
\end{align}

There is no ancillary mode in the transformation, thus the energy change by Theorem~\ref{the:app} is
\begin{align}
N_S =& N_S^\prime +\frac{1}{2}\operatorname{Tr} \left(\rho \hat{\bm{x}}^T (\bm{S}^T\bm{S}-I) \hat{\bm{x}}\right) \\
=& N_S^\prime +\frac{1}{2}\operatorname{Tr} \left(\rho \hat{\bm{x}}^T 
\begin{pmatrix}
1&0 & 1& 0  \\ 
0&0 & 0& -1\\
1&0&0&0\\
0&-1 & 0& 1\\
\end{pmatrix}
\hat{\bm{x}}\right) \\
=&N_S^\prime+ \frac{1}{2} \expval{\hat{q}_1^2+ \hat{p}_2^2+\{\hat{q}_1,\hat{q}_2\}-\{\hat{p}_1,\hat{p}_2\}}.
\end{align}
\end{proof}






\begin{thebibliography}{60}%
\makeatletter
\providecommand \@ifxundefined [1]{%
 \@ifx{#1\undefined}
}%
\providecommand \@ifnum [1]{%
 \ifnum #1\expandafter \@firstoftwo
 \else \expandafter \@secondoftwo
 \fi
}%
\providecommand \@ifx [1]{%
 \ifx #1\expandafter \@firstoftwo
 \else \expandafter \@secondoftwo
 \fi
}%
\providecommand \natexlab [1]{#1}%
\providecommand \enquote  [1]{``#1''}%
\providecommand \bibnamefont  [1]{#1}%
\providecommand \bibfnamefont [1]{#1}%
\providecommand \citenamefont [1]{#1}%
\providecommand \href@noop [0]{\@secondoftwo}%
\providecommand \href [0]{\begingroup \@sanitize@url \@href}%
\providecommand \@href[1]{\@@startlink{#1}\@@href}%
\providecommand \@@href[1]{\endgroup#1\@@endlink}%
\providecommand \@sanitize@url [0]{\catcode `\\12\catcode `\$12\catcode
  `\&12\catcode `\#12\catcode `\^12\catcode `\_12\catcode `\%12\relax}%
\providecommand \@@startlink[1]{}%
\providecommand \@@endlink[0]{}%
\providecommand \url  [0]{\begingroup\@sanitize@url \@url }%
\providecommand \@url [1]{\endgroup\@href {#1}{\urlprefix }}%
\providecommand \urlprefix  [0]{URL }%
\providecommand \Eprint [0]{\href }%
\providecommand \doibase [0]{https://doi.org/}%
\providecommand \selectlanguage [0]{\@gobble}%
\providecommand \bibinfo  [0]{\@secondoftwo}%
\providecommand \bibfield  [0]{\@secondoftwo}%
\providecommand \translation [1]{[#1]}%
\providecommand \BibitemOpen [0]{}%
\providecommand \bibitemStop [0]{}%
\providecommand \bibitemNoStop [0]{.\EOS\space}%
\providecommand \EOS [0]{\spacefactor3000\relax}%
\providecommand \BibitemShut  [1]{\csname bibitem#1\endcsname}%
\let\auto@bib@innerbib\@empty
\bibitem [{\citenamefont {Weedbrook}\ \emph {et~al.}(2012)\citenamefont
  {Weedbrook}, \citenamefont {Pirandola}, \citenamefont
  {Garc{\'\i}a-Patr{\'o}n}, \citenamefont {Cerf}, \citenamefont {Ralph},
  \citenamefont {Shapiro},\ and\ \citenamefont
  {Lloyd}}]{weedbrook2012gaussian}%
  \BibitemOpen
  \bibfield  {author} {\bibinfo {author} {\bibfnamefont {C.}~\bibnamefont
  {Weedbrook}}, \bibinfo {author} {\bibfnamefont {S.}~\bibnamefont
  {Pirandola}}, \bibinfo {author} {\bibfnamefont {R.}~\bibnamefont
  {Garc{\'\i}a-Patr{\'o}n}}, \bibinfo {author} {\bibfnamefont {N.~J.}\
  \bibnamefont {Cerf}}, \bibinfo {author} {\bibfnamefont {T.~C.}\ \bibnamefont
  {Ralph}}, \bibinfo {author} {\bibfnamefont {J.~H.}\ \bibnamefont {Shapiro}},\
  and\ \bibinfo {author} {\bibfnamefont {S.}~\bibnamefont {Lloyd}},\ }\bibfield
   {title} {\bibinfo {title} {Gaussian quantum information},\ }\href@noop {}
  {\bibfield  {journal} {\bibinfo  {journal} {Rev. Mod. Phys.}\ }\textbf
  {\bibinfo {volume} {84}},\ \bibinfo {pages} {621} (\bibinfo {year}
  {2012})}\BibitemShut {NoStop}%
\bibitem [{\citenamefont {Caves}(1981)}]{caves1981quantum}%
  \BibitemOpen
  \bibfield  {author} {\bibinfo {author} {\bibfnamefont {C.~M.}\ \bibnamefont
  {Caves}},\ }\bibfield  {title} {\bibinfo {title} {Quantum-mechanical noise in
  an interferometer},\ }\href@noop {} {\bibfield  {journal} {\bibinfo
  {journal} {Phys. Rev. D}\ }\textbf {\bibinfo {volume} {23}},\ \bibinfo
  {pages} {1693} (\bibinfo {year} {1981})}\BibitemShut {NoStop}%
\bibitem [{\citenamefont {Ganapathy}\ \emph {et~al.}(2023)\citenamefont
  {Ganapathy}, \citenamefont {Jia}, \citenamefont {Nakano}, \citenamefont {Xu},
  \citenamefont {Aritomi}, \citenamefont {Cullen}, \citenamefont {Kijbunchoo},
  \citenamefont {Dwyer}, \citenamefont {Mullavey},\ and\ \citenamefont
  {McCuller}}]{ganapathy2023broadband}%
  \BibitemOpen
  \bibfield  {author} {\bibinfo {author} {\bibfnamefont {D.}~\bibnamefont
  {Ganapathy}}, \bibinfo {author} {\bibfnamefont {W.}~\bibnamefont {Jia}},
  \bibinfo {author} {\bibfnamefont {M.}~\bibnamefont {Nakano}}, \bibinfo
  {author} {\bibfnamefont {V.}~\bibnamefont {Xu}}, \bibinfo {author}
  {\bibfnamefont {N.}~\bibnamefont {Aritomi}}, \bibinfo {author} {\bibfnamefont
  {T.}~\bibnamefont {Cullen}}, \bibinfo {author} {\bibfnamefont
  {N.}~\bibnamefont {Kijbunchoo}}, \bibinfo {author} {\bibfnamefont {S.~E.}\
  \bibnamefont {Dwyer}}, \bibinfo {author} {\bibfnamefont {A.}~\bibnamefont
  {Mullavey}},\ and\ \bibinfo {author} {\bibfnamefont {L.~e.~a.}\ \bibnamefont
  {McCuller}} (\bibinfo {collaboration} {LIGO O4 Detector Collaboration}),\
  }\bibfield  {title} {\bibinfo {title} {Broadband quantum enhancement of the
  ligo detectors with frequency-dependent squeezing},\ }\href
  {https://doi.org/10.1103/PhysRevX.13.041021} {\bibfield  {journal} {\bibinfo
  {journal} {Phys. Rev. X}\ }\textbf {\bibinfo {volume} {13}},\ \bibinfo
  {pages} {041021} (\bibinfo {year} {2023})}\BibitemShut {NoStop}%
\bibitem [{\citenamefont {Demkowicz-Dobrza{\'n}ski}\ \emph
  {et~al.}(2013)\citenamefont {Demkowicz-Dobrza{\'n}ski}, \citenamefont
  {Banaszek},\ and\ \citenamefont {Schnabel}}]{demkowicz2013fundamental}%
  \BibitemOpen
  \bibfield  {author} {\bibinfo {author} {\bibfnamefont {R.}~\bibnamefont
  {Demkowicz-Dobrza{\'n}ski}}, \bibinfo {author} {\bibfnamefont
  {K.}~\bibnamefont {Banaszek}},\ and\ \bibinfo {author} {\bibfnamefont
  {R.}~\bibnamefont {Schnabel}},\ }\bibfield  {title} {\bibinfo {title}
  {Fundamental quantum interferometry bound for the squeezed-light-enhanced
  gravitational wave detector geo 600},\ }\href@noop {} {\bibfield  {journal}
  {\bibinfo  {journal} {Phys. Rev. A}\ }\textbf {\bibinfo {volume} {88}},\
  \bibinfo {pages} {041802} (\bibinfo {year} {2013})}\BibitemShut {NoStop}%
\bibitem [{\citenamefont {Escher}\ \emph {et~al.}(2011)\citenamefont {Escher},
  \citenamefont {de~Matos~Filho},\ and\ \citenamefont
  {Davidovich}}]{escher2011general}%
  \BibitemOpen
  \bibfield  {author} {\bibinfo {author} {\bibfnamefont {B.}~\bibnamefont
  {Escher}}, \bibinfo {author} {\bibfnamefont {R.~L.}\ \bibnamefont
  {de~Matos~Filho}},\ and\ \bibinfo {author} {\bibfnamefont {L.}~\bibnamefont
  {Davidovich}},\ }\bibfield  {title} {\bibinfo {title} {General framework for
  estimating the ultimate precision limit in noisy quantum-enhanced
  metrology},\ }\href@noop {} {\bibfield  {journal} {\bibinfo  {journal} {Nat.
  Phys.}\ }\textbf {\bibinfo {volume} {7}},\ \bibinfo {pages} {406} (\bibinfo
  {year} {2011})}\BibitemShut {NoStop}%
\bibitem [{\citenamefont {Zhuang}\ \emph {et~al.}(2018)\citenamefont {Zhuang},
  \citenamefont {Zhang},\ and\ \citenamefont
  {Shapiro}}]{zhuang2018distributed}%
  \BibitemOpen
  \bibfield  {author} {\bibinfo {author} {\bibfnamefont {Q.}~\bibnamefont
  {Zhuang}}, \bibinfo {author} {\bibfnamefont {Z.}~\bibnamefont {Zhang}},\ and\
  \bibinfo {author} {\bibfnamefont {J.~H.}\ \bibnamefont {Shapiro}},\
  }\bibfield  {title} {\bibinfo {title} {Distributed quantum sensing using
  continuous-variable multipartite entanglement},\ }\href@noop {} {\bibfield
  {journal} {\bibinfo  {journal} {Phys. Rev. A}\ }\textbf {\bibinfo {volume}
  {97}},\ \bibinfo {pages} {032329} (\bibinfo {year} {2018})}\BibitemShut
  {NoStop}%
\bibitem [{\citenamefont {Ge}\ \emph {et~al.}(2018)\citenamefont {Ge},
  \citenamefont {Jacobs}, \citenamefont {Eldredge}, \citenamefont {Gorshkov},\
  and\ \citenamefont {Foss-Feig}}]{ge2018distributed}%
  \BibitemOpen
  \bibfield  {author} {\bibinfo {author} {\bibfnamefont {W.}~\bibnamefont
  {Ge}}, \bibinfo {author} {\bibfnamefont {K.}~\bibnamefont {Jacobs}}, \bibinfo
  {author} {\bibfnamefont {Z.}~\bibnamefont {Eldredge}}, \bibinfo {author}
  {\bibfnamefont {A.~V.}\ \bibnamefont {Gorshkov}},\ and\ \bibinfo {author}
  {\bibfnamefont {M.}~\bibnamefont {Foss-Feig}},\ }\bibfield  {title} {\bibinfo
  {title} {Distributed quantum metrology with linear networks and separable
  inputs},\ }\href@noop {} {\bibfield  {journal} {\bibinfo  {journal} {Phys.
  Rev. Lett.}\ }\textbf {\bibinfo {volume} {121}},\ \bibinfo {pages} {043604}
  (\bibinfo {year} {2018})}\BibitemShut {NoStop}%
\bibitem [{\citenamefont {Zhang}\ and\ \citenamefont
  {Zhuang}(2021)}]{zhang2021distributed}%
  \BibitemOpen
  \bibfield  {author} {\bibinfo {author} {\bibfnamefont {Z.}~\bibnamefont
  {Zhang}}\ and\ \bibinfo {author} {\bibfnamefont {Q.}~\bibnamefont {Zhuang}},\
  }\bibfield  {title} {\bibinfo {title} {Distributed quantum sensing},\
  }\href@noop {} {\bibfield  {journal} {\bibinfo  {journal} {Quantum Sci.
  Techno.}\ }\textbf {\bibinfo {volume} {6}},\ \bibinfo {pages} {043001}
  (\bibinfo {year} {2021})}\BibitemShut {NoStop}%
\bibitem [{\citenamefont {Guo}\ \emph {et~al.}(2020)\citenamefont {Guo},
  \citenamefont {Breum}, \citenamefont {Borregaard}, \citenamefont {Izumi},
  \citenamefont {Larsen}, \citenamefont {Gehring}, \citenamefont {Christandl},
  \citenamefont {Neergaard-Nielsen},\ and\ \citenamefont
  {Andersen}}]{guo2020distributed}%
  \BibitemOpen
  \bibfield  {author} {\bibinfo {author} {\bibfnamefont {X.}~\bibnamefont
  {Guo}}, \bibinfo {author} {\bibfnamefont {C.~R.}\ \bibnamefont {Breum}},
  \bibinfo {author} {\bibfnamefont {J.}~\bibnamefont {Borregaard}}, \bibinfo
  {author} {\bibfnamefont {S.}~\bibnamefont {Izumi}}, \bibinfo {author}
  {\bibfnamefont {M.~V.}\ \bibnamefont {Larsen}}, \bibinfo {author}
  {\bibfnamefont {T.}~\bibnamefont {Gehring}}, \bibinfo {author} {\bibfnamefont
  {M.}~\bibnamefont {Christandl}}, \bibinfo {author} {\bibfnamefont {J.~S.}\
  \bibnamefont {Neergaard-Nielsen}},\ and\ \bibinfo {author} {\bibfnamefont
  {U.~L.}\ \bibnamefont {Andersen}},\ }\bibfield  {title} {\bibinfo {title}
  {Distributed quantum sensing in a continuous-variable entangled network},\
  }\href@noop {} {\bibfield  {journal} {\bibinfo  {journal} {Nat. Phys.}\
  }\textbf {\bibinfo {volume} {16}},\ \bibinfo {pages} {281} (\bibinfo {year}
  {2020})}\BibitemShut {NoStop}%
\bibitem [{\citenamefont {Liu}\ \emph {et~al.}(2021{\natexlab{a}})\citenamefont
  {Liu}, \citenamefont {Zhang}, \citenamefont {Li}, \citenamefont {Zhang},
  \citenamefont {Yin}, \citenamefont {Fei}, \citenamefont {Li}, \citenamefont
  {Liu}, \citenamefont {Xu}, \citenamefont {Chen} \emph
  {et~al.}}]{liu2021distributed}%
  \BibitemOpen
  \bibfield  {author} {\bibinfo {author} {\bibfnamefont {L.-Z.}\ \bibnamefont
  {Liu}}, \bibinfo {author} {\bibfnamefont {Y.-Z.}\ \bibnamefont {Zhang}},
  \bibinfo {author} {\bibfnamefont {Z.-D.}\ \bibnamefont {Li}}, \bibinfo
  {author} {\bibfnamefont {R.}~\bibnamefont {Zhang}}, \bibinfo {author}
  {\bibfnamefont {X.-F.}\ \bibnamefont {Yin}}, \bibinfo {author} {\bibfnamefont
  {Y.-Y.}\ \bibnamefont {Fei}}, \bibinfo {author} {\bibfnamefont
  {L.}~\bibnamefont {Li}}, \bibinfo {author} {\bibfnamefont {N.-L.}\
  \bibnamefont {Liu}}, \bibinfo {author} {\bibfnamefont {F.}~\bibnamefont
  {Xu}}, \bibinfo {author} {\bibfnamefont {Y.-A.}\ \bibnamefont {Chen}}, \emph
  {et~al.},\ }\bibfield  {title} {\bibinfo {title} {Distributed quantum phase
  estimation with entangled photons},\ }\href@noop {} {\bibfield  {journal}
  {\bibinfo  {journal} {Nat. Photon.}\ }\textbf {\bibinfo {volume} {15}},\
  \bibinfo {pages} {137} (\bibinfo {year} {2021}{\natexlab{a}})}\BibitemShut
  {NoStop}%
\bibitem [{\citenamefont {Xia}\ \emph {et~al.}(2020)\citenamefont {Xia},
  \citenamefont {Li}, \citenamefont {Clark}, \citenamefont {Hart},
  \citenamefont {Zhuang},\ and\ \citenamefont {Zhang}}]{xia2020demonstration}%
  \BibitemOpen
  \bibfield  {author} {\bibinfo {author} {\bibfnamefont {Y.}~\bibnamefont
  {Xia}}, \bibinfo {author} {\bibfnamefont {W.}~\bibnamefont {Li}}, \bibinfo
  {author} {\bibfnamefont {W.}~\bibnamefont {Clark}}, \bibinfo {author}
  {\bibfnamefont {D.}~\bibnamefont {Hart}}, \bibinfo {author} {\bibfnamefont
  {Q.}~\bibnamefont {Zhuang}},\ and\ \bibinfo {author} {\bibfnamefont
  {Z.}~\bibnamefont {Zhang}},\ }\bibfield  {title} {\bibinfo {title}
  {Demonstration of a reconfigurable entangled radio-frequency photonic sensor
  network},\ }\href {https://doi.org/10.1103/PhysRevLett.124.150502} {\bibfield
   {journal} {\bibinfo  {journal} {Phys. Rev. Lett.}\ }\textbf {\bibinfo
  {volume} {124}},\ \bibinfo {pages} {150502} (\bibinfo {year}
  {2020})}\BibitemShut {NoStop}%
\bibitem [{\citenamefont {Xia}\ \emph {et~al.}(2023)\citenamefont {Xia},
  \citenamefont {Agrawal}, \citenamefont {Pluchar}, \citenamefont {Brady},
  \citenamefont {Liu}, \citenamefont {Zhuang}, \citenamefont {Wilson},\ and\
  \citenamefont {Zhang}}]{xia2023entanglement}%
  \BibitemOpen
  \bibfield  {author} {\bibinfo {author} {\bibfnamefont {Y.}~\bibnamefont
  {Xia}}, \bibinfo {author} {\bibfnamefont {A.~R.}\ \bibnamefont {Agrawal}},
  \bibinfo {author} {\bibfnamefont {C.~M.}\ \bibnamefont {Pluchar}}, \bibinfo
  {author} {\bibfnamefont {A.~J.}\ \bibnamefont {Brady}}, \bibinfo {author}
  {\bibfnamefont {Z.}~\bibnamefont {Liu}}, \bibinfo {author} {\bibfnamefont
  {Q.}~\bibnamefont {Zhuang}}, \bibinfo {author} {\bibfnamefont {D.~J.}\
  \bibnamefont {Wilson}},\ and\ \bibinfo {author} {\bibfnamefont
  {Z.}~\bibnamefont {Zhang}},\ }\bibfield  {title} {\bibinfo {title}
  {Entanglement-enhanced optomechanical sensing},\ }\href
  {https://doi.org/10.1038/s41566-023-01178-0} {\bibfield  {journal} {\bibinfo
  {journal} {Nat. Photon.}\ }\textbf {\bibinfo {volume} {17}},\ \bibinfo
  {pages} {470–477} (\bibinfo {year} {2023})}\BibitemShut {NoStop}%
\bibitem [{\citenamefont {Malia}\ \emph {et~al.}(2022)\citenamefont {Malia},
  \citenamefont {Wu}, \citenamefont {Mart{\'\i}nez-Rinc{\'o}n},\ and\
  \citenamefont {Kasevich}}]{malia2022distributed}%
  \BibitemOpen
  \bibfield  {author} {\bibinfo {author} {\bibfnamefont {B.~K.}\ \bibnamefont
  {Malia}}, \bibinfo {author} {\bibfnamefont {Y.}~\bibnamefont {Wu}}, \bibinfo
  {author} {\bibfnamefont {J.}~\bibnamefont {Mart{\'\i}nez-Rinc{\'o}n}},\ and\
  \bibinfo {author} {\bibfnamefont {M.~A.}\ \bibnamefont {Kasevich}},\
  }\bibfield  {title} {\bibinfo {title} {Distributed quantum sensing with
  mode-entangled spin-squeezed atomic states},\ }\href@noop {} {\bibfield
  {journal} {\bibinfo  {journal} {Nature}\ }\textbf {\bibinfo {volume} {612}},\
  \bibinfo {pages} {661} (\bibinfo {year} {2022})}\BibitemShut {NoStop}%
\bibitem [{\citenamefont {Zhuang}\ and\ \citenamefont
  {Zhang}(2019)}]{zhuang2019physical}%
  \BibitemOpen
  \bibfield  {author} {\bibinfo {author} {\bibfnamefont {Q.}~\bibnamefont
  {Zhuang}}\ and\ \bibinfo {author} {\bibfnamefont {Z.}~\bibnamefont {Zhang}},\
  }\bibfield  {title} {\bibinfo {title} {Physical-layer supervised learning
  assisted by an entangled sensor network},\ }\href
  {https://doi.org/10.1103/PhysRevX.9.041023} {\bibfield  {journal} {\bibinfo
  {journal} {Phys. Rev. X}\ }\textbf {\bibinfo {volume} {9}},\ \bibinfo {pages}
  {041023} (\bibinfo {year} {2019})}\BibitemShut {NoStop}%
\bibitem [{\citenamefont {Cerezo}\ \emph {et~al.}(2021)\citenamefont {Cerezo},
  \citenamefont {Arrasmith}, \citenamefont {Babbush}, \citenamefont {Benjamin},
  \citenamefont {Endo}, \citenamefont {Fujii}, \citenamefont {McClean},
  \citenamefont {Mitarai}, \citenamefont {Yuan}, \citenamefont {Cincio} \emph
  {et~al.}}]{cerezo2021variational}%
  \BibitemOpen
  \bibfield  {author} {\bibinfo {author} {\bibfnamefont {M.}~\bibnamefont
  {Cerezo}}, \bibinfo {author} {\bibfnamefont {A.}~\bibnamefont {Arrasmith}},
  \bibinfo {author} {\bibfnamefont {R.}~\bibnamefont {Babbush}}, \bibinfo
  {author} {\bibfnamefont {S.~C.}\ \bibnamefont {Benjamin}}, \bibinfo {author}
  {\bibfnamefont {S.}~\bibnamefont {Endo}}, \bibinfo {author} {\bibfnamefont
  {K.}~\bibnamefont {Fujii}}, \bibinfo {author} {\bibfnamefont {J.~R.}\
  \bibnamefont {McClean}}, \bibinfo {author} {\bibfnamefont {K.}~\bibnamefont
  {Mitarai}}, \bibinfo {author} {\bibfnamefont {X.}~\bibnamefont {Yuan}},
  \bibinfo {author} {\bibfnamefont {L.}~\bibnamefont {Cincio}}, \emph
  {et~al.},\ }\bibfield  {title} {\bibinfo {title} {Variational quantum
  algorithms},\ }\href@noop {} {\bibfield  {journal} {\bibinfo  {journal} {Nat.
  Rev. Phys.}\ }\textbf {\bibinfo {volume} {3}},\ \bibinfo {pages} {625}
  (\bibinfo {year} {2021})}\BibitemShut {NoStop}%
\bibitem [{\citenamefont {Xia}\ \emph {et~al.}(2021)\citenamefont {Xia},
  \citenamefont {Li}, \citenamefont {Zhuang},\ and\ \citenamefont
  {Zhang}}]{xia2021quantum}%
  \BibitemOpen
  \bibfield  {author} {\bibinfo {author} {\bibfnamefont {Y.}~\bibnamefont
  {Xia}}, \bibinfo {author} {\bibfnamefont {W.}~\bibnamefont {Li}}, \bibinfo
  {author} {\bibfnamefont {Q.}~\bibnamefont {Zhuang}},\ and\ \bibinfo {author}
  {\bibfnamefont {Z.}~\bibnamefont {Zhang}},\ }\bibfield  {title} {\bibinfo
  {title} {Quantum-enhanced data classification with a variational entangled
  sensor network},\ }\href {https://doi.org/10.1103/PhysRevX.11.021047}
  {\bibfield  {journal} {\bibinfo  {journal} {Phys. Rev. X}\ }\textbf {\bibinfo
  {volume} {11}},\ \bibinfo {pages} {021047} (\bibinfo {year}
  {2021})}\BibitemShut {NoStop}%
\bibitem [{\citenamefont {Eickbusch}\ \emph {et~al.}(2022)\citenamefont
  {Eickbusch}, \citenamefont {Sivak}, \citenamefont {Ding}, \citenamefont
  {Elder}, \citenamefont {Jha}, \citenamefont {Venkatraman}, \citenamefont
  {Royer}, \citenamefont {Girvin}, \citenamefont {Schoelkopf},\ and\
  \citenamefont {Devoret}}]{eickbusch2022fast}%
  \BibitemOpen
  \bibfield  {author} {\bibinfo {author} {\bibfnamefont {A.}~\bibnamefont
  {Eickbusch}}, \bibinfo {author} {\bibfnamefont {V.}~\bibnamefont {Sivak}},
  \bibinfo {author} {\bibfnamefont {A.~Z.}\ \bibnamefont {Ding}}, \bibinfo
  {author} {\bibfnamefont {S.~S.}\ \bibnamefont {Elder}}, \bibinfo {author}
  {\bibfnamefont {S.~R.}\ \bibnamefont {Jha}}, \bibinfo {author} {\bibfnamefont
  {J.}~\bibnamefont {Venkatraman}}, \bibinfo {author} {\bibfnamefont
  {B.}~\bibnamefont {Royer}}, \bibinfo {author} {\bibfnamefont
  {S.}~\bibnamefont {Girvin}}, \bibinfo {author} {\bibfnamefont {R.~J.}\
  \bibnamefont {Schoelkopf}},\ and\ \bibinfo {author} {\bibfnamefont {M.~H.}\
  \bibnamefont {Devoret}},\ }\bibfield  {title} {\bibinfo {title} {Fast
  universal control of an oscillator with weak dispersive coupling to a
  qubit},\ }\href@noop {} {\bibfield  {journal} {\bibinfo  {journal} {Nat.
  Phys.}\ }\textbf {\bibinfo {volume} {18}},\ \bibinfo {pages} {1464–1469}
  (\bibinfo {year} {2022})}\BibitemShut {NoStop}%
\bibitem [{\citenamefont {Diringer}\ \emph {et~al.}(2023)\citenamefont
  {Diringer}, \citenamefont {Blumenthal}, \citenamefont {Grinberg},
  \citenamefont {Jiang},\ and\ \citenamefont
  {Hacohen-Gourgy}}]{diringer2023conditional}%
  \BibitemOpen
  \bibfield  {author} {\bibinfo {author} {\bibfnamefont {A.~A.}\ \bibnamefont
  {Diringer}}, \bibinfo {author} {\bibfnamefont {E.}~\bibnamefont
  {Blumenthal}}, \bibinfo {author} {\bibfnamefont {A.}~\bibnamefont
  {Grinberg}}, \bibinfo {author} {\bibfnamefont {L.}~\bibnamefont {Jiang}},\
  and\ \bibinfo {author} {\bibfnamefont {S.}~\bibnamefont {Hacohen-Gourgy}},\
  }\bibfield  {title} {\bibinfo {title} {Conditional not displacement: fast
  multi-oscillator control with a single qubit},\ }\href@noop {} {\bibfield
  {journal} {\bibinfo  {journal} {arXiv:2301.09831}\ } (\bibinfo {year}
  {2023})}\BibitemShut {NoStop}%
\bibitem [{\citenamefont {Fl{\"u}hmann}\ \emph {et~al.}(2019)\citenamefont
  {Fl{\"u}hmann}, \citenamefont {Nguyen}, \citenamefont {Marinelli},
  \citenamefont {Negnevitsky}, \citenamefont {Mehta},\ and\ \citenamefont
  {Home}}]{fluhmann2019encoding}%
  \BibitemOpen
  \bibfield  {author} {\bibinfo {author} {\bibfnamefont {C.}~\bibnamefont
  {Fl{\"u}hmann}}, \bibinfo {author} {\bibfnamefont {T.~L.}\ \bibnamefont
  {Nguyen}}, \bibinfo {author} {\bibfnamefont {M.}~\bibnamefont {Marinelli}},
  \bibinfo {author} {\bibfnamefont {V.}~\bibnamefont {Negnevitsky}}, \bibinfo
  {author} {\bibfnamefont {K.}~\bibnamefont {Mehta}},\ and\ \bibinfo {author}
  {\bibfnamefont {J.}~\bibnamefont {Home}},\ }\bibfield  {title} {\bibinfo
  {title} {Encoding a qubit in a trapped-ion mechanical oscillator},\
  }\href@noop {} {\bibfield  {journal} {\bibinfo  {journal} {Nature}\ }\textbf
  {\bibinfo {volume} {566}},\ \bibinfo {pages} {513} (\bibinfo {year}
  {2019})}\BibitemShut {NoStop}%
\bibitem [{\citenamefont {Hauer}\ \emph {et~al.}(2023)\citenamefont {Hauer},
  \citenamefont {Combes},\ and\ \citenamefont {Teufel}}]{hauer2023nonlinear}%
  \BibitemOpen
  \bibfield  {author} {\bibinfo {author} {\bibfnamefont {B.~D.}\ \bibnamefont
  {Hauer}}, \bibinfo {author} {\bibfnamefont {J.}~\bibnamefont {Combes}},\ and\
  \bibinfo {author} {\bibfnamefont {J.~D.}\ \bibnamefont {Teufel}},\ }\bibfield
   {title} {\bibinfo {title} {Nonlinear sideband cooling to a cat state of
  motion},\ }\href@noop {} {\bibfield  {journal} {\bibinfo  {journal} {Phys.
  Rev. Lett.}\ }\textbf {\bibinfo {volume} {130}},\ \bibinfo {pages} {213604}
  (\bibinfo {year} {2023})}\BibitemShut {NoStop}%
\bibitem [{\citenamefont {Gerashchenko}\ \emph {et~al.}(2024)\citenamefont
  {Gerashchenko}, \citenamefont {Najera~Santos}, \citenamefont {Rousseau},
  \citenamefont {Patange}, \citenamefont {Riva}, \citenamefont {Villiers},
  \citenamefont {Briant}, \citenamefont {Cohadon}, \citenamefont {Heidmann},
  \citenamefont {Le~Sueur} \emph {et~al.}}]{gerashchenko2024towards}%
  \BibitemOpen
  \bibfield  {author} {\bibinfo {author} {\bibfnamefont {K.}~\bibnamefont
  {Gerashchenko}}, \bibinfo {author} {\bibfnamefont {B.~L.}\ \bibnamefont
  {Najera~Santos}}, \bibinfo {author} {\bibfnamefont {R.}~\bibnamefont
  {Rousseau}}, \bibinfo {author} {\bibfnamefont {H.}~\bibnamefont {Patange}},
  \bibinfo {author} {\bibfnamefont {A.}~\bibnamefont {Riva}}, \bibinfo {author}
  {\bibfnamefont {M.}~\bibnamefont {Villiers}}, \bibinfo {author}
  {\bibfnamefont {T.}~\bibnamefont {Briant}}, \bibinfo {author} {\bibfnamefont
  {P.-F.}\ \bibnamefont {Cohadon}}, \bibinfo {author} {\bibfnamefont
  {A.}~\bibnamefont {Heidmann}}, \bibinfo {author} {\bibfnamefont
  {H.}~\bibnamefont {Le~Sueur}}, \emph {et~al.},\ }\bibfield  {title} {\bibinfo
  {title} {Towards quantum control of an ultracoherent mechanical resonator
  with a fluxonium qubit},\ }\href@noop {} {\bibfield  {journal} {\bibinfo
  {journal} {Bulletin of the American Physical Society}\ } (\bibinfo {year}
  {2024})}\BibitemShut {NoStop}%
\bibitem [{\citenamefont {Backes}\ \emph
  {et~al.}(2021{\natexlab{a}})\citenamefont {Backes}, \citenamefont {Palken},
  \citenamefont {Kenany}, \citenamefont {Brubaker}, \citenamefont {Cahn},
  \citenamefont {Droster}, \citenamefont {Hilton}, \citenamefont {Ghosh},
  \citenamefont {Jackson}, \citenamefont {Lamoreaux} \emph
  {et~al.}}]{backes2021quantum}%
  \BibitemOpen
  \bibfield  {author} {\bibinfo {author} {\bibfnamefont {K.~M.}\ \bibnamefont
  {Backes}}, \bibinfo {author} {\bibfnamefont {D.~A.}\ \bibnamefont {Palken}},
  \bibinfo {author} {\bibfnamefont {S.~A.}\ \bibnamefont {Kenany}}, \bibinfo
  {author} {\bibfnamefont {B.~M.}\ \bibnamefont {Brubaker}}, \bibinfo {author}
  {\bibfnamefont {S.}~\bibnamefont {Cahn}}, \bibinfo {author} {\bibfnamefont
  {A.}~\bibnamefont {Droster}}, \bibinfo {author} {\bibfnamefont {G.~C.}\
  \bibnamefont {Hilton}}, \bibinfo {author} {\bibfnamefont {S.}~\bibnamefont
  {Ghosh}}, \bibinfo {author} {\bibfnamefont {H.}~\bibnamefont {Jackson}},
  \bibinfo {author} {\bibfnamefont {S.~K.}\ \bibnamefont {Lamoreaux}}, \emph
  {et~al.},\ }\bibfield  {title} {\bibinfo {title} {A quantum enhanced search
  for dark matter axions},\ }\href@noop {} {\bibfield  {journal} {\bibinfo
  {journal} {Nature}\ }\textbf {\bibinfo {volume} {590}},\ \bibinfo {pages}
  {238} (\bibinfo {year} {2021}{\natexlab{a}})}\BibitemShut {NoStop}%
\bibitem [{\citenamefont {Shi}\ and\ \citenamefont
  {Zhuang}(2023)}]{shi2023ultimate}%
  \BibitemOpen
  \bibfield  {author} {\bibinfo {author} {\bibfnamefont {H.}~\bibnamefont
  {Shi}}\ and\ \bibinfo {author} {\bibfnamefont {Q.}~\bibnamefont {Zhuang}},\
  }\bibfield  {title} {\bibinfo {title} {Ultimate precision limit of noise
  sensing and dark matter search},\ }\href@noop {} {\bibfield  {journal}
  {\bibinfo  {journal} {npj Quantum Inf.}\ }\textbf {\bibinfo {volume} {9}},\
  \bibinfo {pages} {27} (\bibinfo {year} {2023})}\BibitemShut {NoStop}%
\bibitem [{\citenamefont {Brady}\ \emph {et~al.}(2022)\citenamefont {Brady},
  \citenamefont {Gao}, \citenamefont {Harnik}, \citenamefont {Liu},
  \citenamefont {Zhang},\ and\ \citenamefont {Zhuang}}]{brady2022entangled}%
  \BibitemOpen
  \bibfield  {author} {\bibinfo {author} {\bibfnamefont {A.~J.}\ \bibnamefont
  {Brady}}, \bibinfo {author} {\bibfnamefont {C.}~\bibnamefont {Gao}}, \bibinfo
  {author} {\bibfnamefont {R.}~\bibnamefont {Harnik}}, \bibinfo {author}
  {\bibfnamefont {Z.}~\bibnamefont {Liu}}, \bibinfo {author} {\bibfnamefont
  {Z.}~\bibnamefont {Zhang}},\ and\ \bibinfo {author} {\bibfnamefont
  {Q.}~\bibnamefont {Zhuang}},\ }\bibfield  {title} {\bibinfo {title}
  {Entangled sensor-networks for dark-matter searches},\ }\href
  {https://doi.org/10.1103/PRXQuantum.3.030333} {\bibfield  {journal} {\bibinfo
   {journal} {PRX Quantum}\ }\textbf {\bibinfo {volume} {3}},\ \bibinfo {pages}
  {030333} (\bibinfo {year} {2022})}\BibitemShut {NoStop}%
\bibitem [{\citenamefont {Huang}\ \emph {et~al.}(2022)\citenamefont {Huang},
  \citenamefont {Broughton}, \citenamefont {Cotler}, \citenamefont {Chen},
  \citenamefont {Li}, \citenamefont {Mohseni}, \citenamefont {Neven},
  \citenamefont {Babbush}, \citenamefont {Kueng}, \citenamefont {Preskill}
  \emph {et~al.}}]{huang2022quantum}%
  \BibitemOpen
  \bibfield  {author} {\bibinfo {author} {\bibfnamefont {H.-Y.}\ \bibnamefont
  {Huang}}, \bibinfo {author} {\bibfnamefont {M.}~\bibnamefont {Broughton}},
  \bibinfo {author} {\bibfnamefont {J.}~\bibnamefont {Cotler}}, \bibinfo
  {author} {\bibfnamefont {S.}~\bibnamefont {Chen}}, \bibinfo {author}
  {\bibfnamefont {J.}~\bibnamefont {Li}}, \bibinfo {author} {\bibfnamefont
  {M.}~\bibnamefont {Mohseni}}, \bibinfo {author} {\bibfnamefont
  {H.}~\bibnamefont {Neven}}, \bibinfo {author} {\bibfnamefont
  {R.}~\bibnamefont {Babbush}}, \bibinfo {author} {\bibfnamefont
  {R.}~\bibnamefont {Kueng}}, \bibinfo {author} {\bibfnamefont
  {J.}~\bibnamefont {Preskill}}, \emph {et~al.},\ }\bibfield  {title} {\bibinfo
  {title} {Quantum advantage in learning from experiments},\ }\href@noop {}
  {\bibfield  {journal} {\bibinfo  {journal} {Science}\ }\textbf {\bibinfo
  {volume} {376}},\ \bibinfo {pages} {1182} (\bibinfo {year}
  {2022})}\BibitemShut {NoStop}%
\bibitem [{\citenamefont {Banchi}\ \emph {et~al.}(2021)\citenamefont {Banchi},
  \citenamefont {Pereira},\ and\ \citenamefont
  {Pirandola}}]{banchi2021generalization}%
  \BibitemOpen
  \bibfield  {author} {\bibinfo {author} {\bibfnamefont {L.}~\bibnamefont
  {Banchi}}, \bibinfo {author} {\bibfnamefont {J.}~\bibnamefont {Pereira}},\
  and\ \bibinfo {author} {\bibfnamefont {S.}~\bibnamefont {Pirandola}},\
  }\bibfield  {title} {\bibinfo {title} {Generalization in quantum machine
  learning: A quantum information standpoint},\ }\href@noop {} {\bibfield
  {journal} {\bibinfo  {journal} {PRX Quantum}\ }\textbf {\bibinfo {volume}
  {2}},\ \bibinfo {pages} {040321} (\bibinfo {year} {2021})}\BibitemShut
  {NoStop}%
\bibitem [{\citenamefont {Caro}\ \emph {et~al.}(2023)\citenamefont {Caro},
  \citenamefont {Huang}, \citenamefont {Ezzell}, \citenamefont {Gibbs},
  \citenamefont {Sornborger}, \citenamefont {Cincio}, \citenamefont {Coles},\
  and\ \citenamefont {Holmes}}]{caro2023out}%
  \BibitemOpen
  \bibfield  {author} {\bibinfo {author} {\bibfnamefont {M.~C.}\ \bibnamefont
  {Caro}}, \bibinfo {author} {\bibfnamefont {H.-Y.}\ \bibnamefont {Huang}},
  \bibinfo {author} {\bibfnamefont {N.}~\bibnamefont {Ezzell}}, \bibinfo
  {author} {\bibfnamefont {J.}~\bibnamefont {Gibbs}}, \bibinfo {author}
  {\bibfnamefont {A.~T.}\ \bibnamefont {Sornborger}}, \bibinfo {author}
  {\bibfnamefont {L.}~\bibnamefont {Cincio}}, \bibinfo {author} {\bibfnamefont
  {P.~J.}\ \bibnamefont {Coles}},\ and\ \bibinfo {author} {\bibfnamefont
  {Z.}~\bibnamefont {Holmes}},\ }\bibfield  {title} {\bibinfo {title}
  {Out-of-distribution generalization for learning quantum dynamics},\
  }\href@noop {} {\bibfield  {journal} {\bibinfo  {journal} {Nat. Commun.}\
  }\textbf {\bibinfo {volume} {14}},\ \bibinfo {pages} {3751} (\bibinfo {year}
  {2023})}\BibitemShut {NoStop}%
\bibitem [{\citenamefont {Gebhart}\ \emph {et~al.}(2023)\citenamefont
  {Gebhart}, \citenamefont {Santagati}, \citenamefont {Gentile}, \citenamefont
  {Gauger}, \citenamefont {Craig}, \citenamefont {Ares}, \citenamefont
  {Banchi}, \citenamefont {Marquardt}, \citenamefont {Pezz{\`e}},\ and\
  \citenamefont {Bonato}}]{gebhart2023learning}%
  \BibitemOpen
  \bibfield  {author} {\bibinfo {author} {\bibfnamefont {V.}~\bibnamefont
  {Gebhart}}, \bibinfo {author} {\bibfnamefont {R.}~\bibnamefont {Santagati}},
  \bibinfo {author} {\bibfnamefont {A.~A.}\ \bibnamefont {Gentile}}, \bibinfo
  {author} {\bibfnamefont {E.~M.}\ \bibnamefont {Gauger}}, \bibinfo {author}
  {\bibfnamefont {D.}~\bibnamefont {Craig}}, \bibinfo {author} {\bibfnamefont
  {N.}~\bibnamefont {Ares}}, \bibinfo {author} {\bibfnamefont {L.}~\bibnamefont
  {Banchi}}, \bibinfo {author} {\bibfnamefont {F.}~\bibnamefont {Marquardt}},
  \bibinfo {author} {\bibfnamefont {L.}~\bibnamefont {Pezz{\`e}}},\ and\
  \bibinfo {author} {\bibfnamefont {C.}~\bibnamefont {Bonato}},\ }\bibfield
  {title} {\bibinfo {title} {Learning quantum systems},\ }\href@noop {}
  {\bibfield  {journal} {\bibinfo  {journal} {Nat. Rev. Phys.}\ }\textbf
  {\bibinfo {volume} {5}},\ \bibinfo {pages} {141} (\bibinfo {year}
  {2023})}\BibitemShut {NoStop}%
\bibitem [{\citenamefont {Li}\ \emph {et~al.}(2021)\citenamefont {Li},
  \citenamefont {Ou}, \citenamefont {Lei},\ and\ \citenamefont
  {Liu}}]{li2021cavity}%
  \BibitemOpen
  \bibfield  {author} {\bibinfo {author} {\bibfnamefont {B.-B.}\ \bibnamefont
  {Li}}, \bibinfo {author} {\bibfnamefont {L.}~\bibnamefont {Ou}}, \bibinfo
  {author} {\bibfnamefont {Y.}~\bibnamefont {Lei}},\ and\ \bibinfo {author}
  {\bibfnamefont {Y.-C.}\ \bibnamefont {Liu}},\ }\bibfield  {title} {\bibinfo
  {title} {Cavity optomechanical sensing},\ }\href@noop {} {\bibfield
  {journal} {\bibinfo  {journal} {Nanophotonics}\ }\textbf {\bibinfo {volume}
  {10}},\ \bibinfo {pages} {2799} (\bibinfo {year} {2021})}\BibitemShut
  {NoStop}%
\bibitem [{\citenamefont {Liu}\ \emph {et~al.}(2021{\natexlab{b}})\citenamefont
  {Liu}, \citenamefont {Liu}, \citenamefont {Ren}, \citenamefont {Ma},
  \citenamefont {Dong}, \citenamefont {Zhou},\ and\ \citenamefont
  {Lee}}]{liu2021progress}%
  \BibitemOpen
  \bibfield  {author} {\bibinfo {author} {\bibfnamefont {X.}~\bibnamefont
  {Liu}}, \bibinfo {author} {\bibfnamefont {W.}~\bibnamefont {Liu}}, \bibinfo
  {author} {\bibfnamefont {Z.}~\bibnamefont {Ren}}, \bibinfo {author}
  {\bibfnamefont {Y.}~\bibnamefont {Ma}}, \bibinfo {author} {\bibfnamefont
  {B.}~\bibnamefont {Dong}}, \bibinfo {author} {\bibfnamefont {G.}~\bibnamefont
  {Zhou}},\ and\ \bibinfo {author} {\bibfnamefont {C.}~\bibnamefont {Lee}},\
  }\bibfield  {title} {\bibinfo {title} {Progress of optomechanical micro/nano
  sensors: a review},\ }\href@noop {} {\bibfield  {journal} {\bibinfo
  {journal} {Int. J. Optomechatronics}\ }\textbf {\bibinfo {volume} {15}},\
  \bibinfo {pages} {120} (\bibinfo {year} {2021}{\natexlab{b}})}\BibitemShut
  {NoStop}%
\bibitem [{\citenamefont {Gavartin}\ \emph {et~al.}(2012)\citenamefont
  {Gavartin}, \citenamefont {Verlot},\ and\ \citenamefont
  {Kippenberg}}]{gavartin2012hybrid}%
  \BibitemOpen
  \bibfield  {author} {\bibinfo {author} {\bibfnamefont {E.}~\bibnamefont
  {Gavartin}}, \bibinfo {author} {\bibfnamefont {P.}~\bibnamefont {Verlot}},\
  and\ \bibinfo {author} {\bibfnamefont {T.~J.}\ \bibnamefont {Kippenberg}},\
  }\bibfield  {title} {\bibinfo {title} {A hybrid on-chip optomechanical
  transducer for ultrasensitive force measurements},\ }\href@noop {} {\bibfield
   {journal} {\bibinfo  {journal} {Nature Nanotechnology}\ }\textbf {\bibinfo
  {volume} {7}},\ \bibinfo {pages} {509} (\bibinfo {year} {2012})}\BibitemShut
  {NoStop}%
\bibitem [{\citenamefont {Krause}\ \emph {et~al.}(2012)\citenamefont {Krause},
  \citenamefont {Winger}, \citenamefont {Blasius}, \citenamefont {Lin},\ and\
  \citenamefont {Painter}}]{krause2012high}%
  \BibitemOpen
  \bibfield  {author} {\bibinfo {author} {\bibfnamefont {A.~G.}\ \bibnamefont
  {Krause}}, \bibinfo {author} {\bibfnamefont {M.}~\bibnamefont {Winger}},
  \bibinfo {author} {\bibfnamefont {T.~D.}\ \bibnamefont {Blasius}}, \bibinfo
  {author} {\bibfnamefont {Q.}~\bibnamefont {Lin}},\ and\ \bibinfo {author}
  {\bibfnamefont {O.}~\bibnamefont {Painter}},\ }\bibfield  {title} {\bibinfo
  {title} {A high-resolution microchip optomechanical accelerometer},\
  }\href@noop {} {\bibfield  {journal} {\bibinfo  {journal} {Nat. Photon.}\
  }\textbf {\bibinfo {volume} {6}},\ \bibinfo {pages} {768} (\bibinfo {year}
  {2012})}\BibitemShut {NoStop}%
\bibitem [{\citenamefont {Forstner}\ \emph {et~al.}(2012)\citenamefont
  {Forstner}, \citenamefont {Prams}, \citenamefont {Knittel}, \citenamefont
  {Van~Ooijen}, \citenamefont {Swaim}, \citenamefont {Harris}, \citenamefont
  {Szorkovszky}, \citenamefont {Bowen},\ and\ \citenamefont
  {Rubinsztein-Dunlop}}]{forstner2012cavity}%
  \BibitemOpen
  \bibfield  {author} {\bibinfo {author} {\bibfnamefont {S.}~\bibnamefont
  {Forstner}}, \bibinfo {author} {\bibfnamefont {S.}~\bibnamefont {Prams}},
  \bibinfo {author} {\bibfnamefont {J.}~\bibnamefont {Knittel}}, \bibinfo
  {author} {\bibfnamefont {E.}~\bibnamefont {Van~Ooijen}}, \bibinfo {author}
  {\bibfnamefont {J.}~\bibnamefont {Swaim}}, \bibinfo {author} {\bibfnamefont
  {G.}~\bibnamefont {Harris}}, \bibinfo {author} {\bibfnamefont
  {A.}~\bibnamefont {Szorkovszky}}, \bibinfo {author} {\bibfnamefont
  {W.}~\bibnamefont {Bowen}},\ and\ \bibinfo {author} {\bibfnamefont
  {H.}~\bibnamefont {Rubinsztein-Dunlop}},\ }\bibfield  {title} {\bibinfo
  {title} {Cavity optomechanical magnetometer},\ }\href@noop {} {\bibfield
  {journal} {\bibinfo  {journal} {Phys. Rev. Lett.}\ }\textbf {\bibinfo
  {volume} {108}},\ \bibinfo {pages} {120801} (\bibinfo {year}
  {2012})}\BibitemShut {NoStop}%
\bibitem [{\citenamefont {Sikivie}(1983)}]{Sikivie:1983ip}%
  \BibitemOpen
  \bibfield  {author} {\bibinfo {author} {\bibfnamefont {P.}~\bibnamefont
  {Sikivie}},\ }\bibfield  {title} {\bibinfo {title} {{Experimental Tests of
  the Invisible Axion}},\ }\href {https://doi.org/10.1103/PhysRevLett.51.1415}
  {\bibfield  {journal} {\bibinfo  {journal} {Phys. Rev. Lett.}\ }\textbf
  {\bibinfo {volume} {51}},\ \bibinfo {pages} {1415} (\bibinfo {year}
  {1983})},\ \bibinfo {note} {[Erratum: Phys.Rev.Lett. 52, 695
  (1984)]}\BibitemShut {NoStop}%
\bibitem [{\citenamefont {Brady}\ \emph
  {et~al.}(2024{\natexlab{a}})\citenamefont {Brady}, \citenamefont {Eickbusch},
  \citenamefont {Singh}, \citenamefont {Wu},\ and\ \citenamefont
  {Zhuang}}]{brady2023advances}%
  \BibitemOpen
  \bibfield  {author} {\bibinfo {author} {\bibfnamefont {A.~J.}\ \bibnamefont
  {Brady}}, \bibinfo {author} {\bibfnamefont {A.}~\bibnamefont {Eickbusch}},
  \bibinfo {author} {\bibfnamefont {S.}~\bibnamefont {Singh}}, \bibinfo
  {author} {\bibfnamefont {J.}~\bibnamefont {Wu}},\ and\ \bibinfo {author}
  {\bibfnamefont {Q.}~\bibnamefont {Zhuang}},\ }\bibfield  {title} {\bibinfo
  {title} {Advances in bosonic quantum error correction with
  gottesman--kitaev--preskill codes: Theory, engineering and applications},\
  }\href@noop {} {\bibfield  {journal} {\bibinfo  {journal} {Prog. Quantum
  Electron.}\ ,\ \bibinfo {pages} {100496}} (\bibinfo {year}
  {2024}{\natexlab{a}})}\BibitemShut {NoStop}%
\bibitem [{\citenamefont {Hastrup}\ and\ \citenamefont
  {Andersen}(2022)}]{hastrup2022protocol}%
  \BibitemOpen
  \bibfield  {author} {\bibinfo {author} {\bibfnamefont {J.}~\bibnamefont
  {Hastrup}}\ and\ \bibinfo {author} {\bibfnamefont {U.~L.}\ \bibnamefont
  {Andersen}},\ }\bibfield  {title} {\bibinfo {title} {Protocol for generating
  optical gottesman-kitaev-preskill states with cavity qed},\ }\href@noop {}
  {\bibfield  {journal} {\bibinfo  {journal} {Phys. Rev. Lett.}\ }\textbf
  {\bibinfo {volume} {128}},\ \bibinfo {pages} {170503} (\bibinfo {year}
  {2022})}\BibitemShut {NoStop}%
\bibitem [{\citenamefont {Penrose}(1996)}]{penrose1996gravity}%
  \BibitemOpen
  \bibfield  {author} {\bibinfo {author} {\bibfnamefont {R.}~\bibnamefont
  {Penrose}},\ }\bibfield  {title} {\bibinfo {title} {On gravity's role in
  quantum state reduction},\ }\href@noop {} {\bibfield  {journal} {\bibinfo
  {journal} {General relativity and gravitation}\ }\textbf {\bibinfo {volume}
  {28}},\ \bibinfo {pages} {581} (\bibinfo {year} {1996})}\BibitemShut
  {NoStop}%
\bibitem [{\citenamefont {Zhang}\ and\ \citenamefont
  {Zhuang}(2023)}]{zhang2023energy}%
  \BibitemOpen
  \bibfield  {author} {\bibinfo {author} {\bibfnamefont {B.}~\bibnamefont
  {Zhang}}\ and\ \bibinfo {author} {\bibfnamefont {Q.}~\bibnamefont {Zhuang}},\
  }\bibfield  {title} {\bibinfo {title} {Energy-dependent barren plateau in
  bosonic variational quantum circuits},\ }\href@noop {} {\bibfield  {journal}
  {\bibinfo  {journal} {arXiv:2305.01799}\ } (\bibinfo {year}
  {2023})}\BibitemShut {NoStop}%
\bibitem [{\citenamefont {Zhuang}\ \emph {et~al.}(2019)\citenamefont {Zhuang},
  \citenamefont {Schuster}, \citenamefont {Yoshida},\ and\ \citenamefont
  {Yao}}]{zhuang2019Scrambling}%
  \BibitemOpen
  \bibfield  {author} {\bibinfo {author} {\bibfnamefont {Q.}~\bibnamefont
  {Zhuang}}, \bibinfo {author} {\bibfnamefont {T.}~\bibnamefont {Schuster}},
  \bibinfo {author} {\bibfnamefont {B.}~\bibnamefont {Yoshida}},\ and\ \bibinfo
  {author} {\bibfnamefont {N.~Y.}\ \bibnamefont {Yao}},\ }\bibfield  {title}
  {\bibinfo {title} {Scrambling and complexity in phase space},\ }\href
  {https://doi.org/10.1103/PhysRevA.99.062334} {\bibfield  {journal} {\bibinfo
  {journal} {Phys. Rev. A}\ }\textbf {\bibinfo {volume} {99}},\ \bibinfo
  {pages} {062334} (\bibinfo {year} {2019})}\BibitemShut {NoStop}%
\bibitem [{\citenamefont {G{\'o}recki}\ \emph {et~al.}(2022)\citenamefont
  {G{\'o}recki}, \citenamefont {Riccardi},\ and\ \citenamefont
  {Maccone}}]{gorecki2022quantum}%
  \BibitemOpen
  \bibfield  {author} {\bibinfo {author} {\bibfnamefont {W.}~\bibnamefont
  {G{\'o}recki}}, \bibinfo {author} {\bibfnamefont {A.}~\bibnamefont
  {Riccardi}},\ and\ \bibinfo {author} {\bibfnamefont {L.}~\bibnamefont
  {Maccone}},\ }\bibfield  {title} {\bibinfo {title} {Quantum metrology of
  noisy spreading channels},\ }\href@noop {} {\bibfield  {journal} {\bibinfo
  {journal} {Phys. Rev. Lett.}\ }\textbf {\bibinfo {volume} {129}},\ \bibinfo
  {pages} {240503} (\bibinfo {year} {2022})}\BibitemShut {NoStop}%
\bibitem [{\citenamefont {Helstrom}(1967)}]{Helstrom_1967}%
  \BibitemOpen
  \bibfield  {author} {\bibinfo {author} {\bibfnamefont {C.}~\bibnamefont
  {Helstrom}},\ }\bibfield  {title} {\bibinfo {title} {Minimum mean-squared
  error of estimates in quantum statistics},\ }\href@noop {} {\bibfield
  {journal} {\bibinfo  {journal} {Phys. Lett. A}\ }\textbf {\bibinfo {volume}
  {25}},\ \bibinfo {pages} {101} (\bibinfo {year} {1967})}\BibitemShut
  {NoStop}%
\bibitem [{\citenamefont {Helstrom}(1976)}]{Helstrom_1976}%
  \BibitemOpen
  \bibfield  {author} {\bibinfo {author} {\bibfnamefont {C.}~\bibnamefont
  {Helstrom}},\ }\href {https://books.google.com/books?id=fv9SAAAAMAAJ} {\emph
  {\bibinfo {title} {Quantum Detection and Estimation Theory}}},\ Mathematics
  in Science and Engineering : a series of monographs and textbooks\ (\bibinfo
  {publisher} {Academic Press},\ \bibinfo {year} {1976})\BibitemShut {NoStop}%
\bibitem [{\citenamefont {Gottesman}\ \emph {et~al.}(2001)\citenamefont
  {Gottesman}, \citenamefont {Kitaev},\ and\ \citenamefont
  {Preskill}}]{gkp2001}%
  \BibitemOpen
  \bibfield  {author} {\bibinfo {author} {\bibfnamefont {D.}~\bibnamefont
  {Gottesman}}, \bibinfo {author} {\bibfnamefont {A.}~\bibnamefont {Kitaev}},\
  and\ \bibinfo {author} {\bibfnamefont {J.}~\bibnamefont {Preskill}},\
  }\bibfield  {title} {\bibinfo {title} {{Encoding a qubit in an oscillator}},\
  }\href {https://doi.org/10.1103/PhysRevA.64.012310} {\bibfield  {journal}
  {\bibinfo  {journal} {Phys. Rev. A}\ }\textbf {\bibinfo {volume} {64}},\
  \bibinfo {pages} {012310} (\bibinfo {year} {2001})}\BibitemShut {NoStop}%
\bibitem [{\citenamefont {Brady}\ \emph
  {et~al.}(2024{\natexlab{b}})\citenamefont {Brady}, \citenamefont {Eickbusch},
  \citenamefont {Singh}, \citenamefont {Wu},\ and\ \citenamefont
  {Zhuang}}]{brady2023GKPrvw}%
  \BibitemOpen
  \bibfield  {author} {\bibinfo {author} {\bibfnamefont {A.~J.}\ \bibnamefont
  {Brady}}, \bibinfo {author} {\bibfnamefont {A.}~\bibnamefont {Eickbusch}},
  \bibinfo {author} {\bibfnamefont {S.}~\bibnamefont {Singh}}, \bibinfo
  {author} {\bibfnamefont {J.}~\bibnamefont {Wu}},\ and\ \bibinfo {author}
  {\bibfnamefont {Q.}~\bibnamefont {Zhuang}},\ }\bibfield  {title} {\bibinfo
  {title} {{Advances in bosonic quantum error correction with
  Gottesman–Kitaev–Preskill Codes: Theory, engineering and applications}},\
  }\href {https://doi.org/https://doi.org/10.1016/j.pquantelec.2023.100496}
  {\bibfield  {journal} {\bibinfo  {journal} {Progress in Quantum Electronics}\
  ,\ \bibinfo {pages} {100496}} (\bibinfo {year}
  {2024}{\natexlab{b}})}\BibitemShut {NoStop}%
\bibitem [{\citenamefont {Aghanim}\ \emph {et~al.}(2020)\citenamefont
  {Aghanim}, \citenamefont {Akrami}, \citenamefont {Ashdown}, \citenamefont
  {Aumont}, \citenamefont {Baccigalupi}, \citenamefont {Ballardini},
  \citenamefont {Banday}, \citenamefont {Barreiro}, \citenamefont {Bartolo},
  \citenamefont {Basak} \emph {et~al.}}]{aghanim2020planck}%
  \BibitemOpen
  \bibfield  {author} {\bibinfo {author} {\bibfnamefont {N.}~\bibnamefont
  {Aghanim}}, \bibinfo {author} {\bibfnamefont {Y.}~\bibnamefont {Akrami}},
  \bibinfo {author} {\bibfnamefont {M.}~\bibnamefont {Ashdown}}, \bibinfo
  {author} {\bibfnamefont {J.}~\bibnamefont {Aumont}}, \bibinfo {author}
  {\bibfnamefont {C.}~\bibnamefont {Baccigalupi}}, \bibinfo {author}
  {\bibfnamefont {M.}~\bibnamefont {Ballardini}}, \bibinfo {author}
  {\bibfnamefont {A.}~\bibnamefont {Banday}}, \bibinfo {author} {\bibfnamefont
  {R.}~\bibnamefont {Barreiro}}, \bibinfo {author} {\bibfnamefont
  {N.}~\bibnamefont {Bartolo}}, \bibinfo {author} {\bibfnamefont
  {S.}~\bibnamefont {Basak}}, \emph {et~al.},\ }\bibfield  {title} {\bibinfo
  {title} {Planck 2018 results-vi. cosmological parameters},\ }\href@noop {}
  {\bibfield  {journal} {\bibinfo  {journal} {Astron. Astrophys.}\ }\textbf
  {\bibinfo {volume} {641}},\ \bibinfo {pages} {A6} (\bibinfo {year}
  {2020})}\BibitemShut {NoStop}%
\bibitem [{\citenamefont {Massey}\ \emph {et~al.}(2010)\citenamefont {Massey},
  \citenamefont {Kitching},\ and\ \citenamefont {Richard}}]{massey2010dark}%
  \BibitemOpen
  \bibfield  {author} {\bibinfo {author} {\bibfnamefont {R.}~\bibnamefont
  {Massey}}, \bibinfo {author} {\bibfnamefont {T.}~\bibnamefont {Kitching}},\
  and\ \bibinfo {author} {\bibfnamefont {J.}~\bibnamefont {Richard}},\
  }\bibfield  {title} {\bibinfo {title} {The dark matter of gravitational
  lensing},\ }\href@noop {} {\bibfield  {journal} {\bibinfo  {journal} {Rep.
  Prog. Phys.}\ }\textbf {\bibinfo {volume} {73}},\ \bibinfo {pages} {086901}
  (\bibinfo {year} {2010})}\BibitemShut {NoStop}%
\bibitem [{\citenamefont {Sofue}\ and\ \citenamefont
  {Rubin}(2001)}]{sofue2001rotation}%
  \BibitemOpen
  \bibfield  {author} {\bibinfo {author} {\bibfnamefont {Y.}~\bibnamefont
  {Sofue}}\ and\ \bibinfo {author} {\bibfnamefont {V.}~\bibnamefont {Rubin}},\
  }\bibfield  {title} {\bibinfo {title} {Rotation curves of spiral galaxies},\
  }\href@noop {} {\bibfield  {journal} {\bibinfo  {journal} {Annu. Rev. Astron.
  Astr.}\ }\textbf {\bibinfo {volume} {39}},\ \bibinfo {pages} {137} (\bibinfo
  {year} {2001})}\BibitemShut {NoStop}%
\bibitem [{\citenamefont {Zheng}\ \emph {et~al.}(2016)\citenamefont {Zheng},
  \citenamefont {Silveri}, \citenamefont {Brierley}, \citenamefont {Girvin},\
  and\ \citenamefont {Lehnert}}]{girvin2016axdm}%
  \BibitemOpen
  \bibfield  {author} {\bibinfo {author} {\bibfnamefont {H.}~\bibnamefont
  {Zheng}}, \bibinfo {author} {\bibfnamefont {M.}~\bibnamefont {Silveri}},
  \bibinfo {author} {\bibfnamefont {R.~T.}\ \bibnamefont {Brierley}}, \bibinfo
  {author} {\bibfnamefont {S.~M.}\ \bibnamefont {Girvin}},\ and\ \bibinfo
  {author} {\bibfnamefont {K.~W.}\ \bibnamefont {Lehnert}},\ }\bibfield
  {title} {\bibinfo {title} {Accelerating dark-matter axion searches with
  quantum measurement technology},\ }\href@noop {} {\bibfield  {journal}
  {\bibinfo  {journal} {arXiv:1607.02529}\ } (\bibinfo {year}
  {2016})}\BibitemShut {NoStop}%
\bibitem [{\citenamefont {Malnou}\ \emph {et~al.}(2019)\citenamefont {Malnou},
  \citenamefont {Palken}, \citenamefont {Brubaker}, \citenamefont {Vale},
  \citenamefont {Hilton},\ and\ \citenamefont {Lehnert}}]{malnou2019}%
  \BibitemOpen
  \bibfield  {author} {\bibinfo {author} {\bibfnamefont {M.}~\bibnamefont
  {Malnou}}, \bibinfo {author} {\bibfnamefont {D.}~\bibnamefont {Palken}},
  \bibinfo {author} {\bibfnamefont {B.}~\bibnamefont {Brubaker}}, \bibinfo
  {author} {\bibfnamefont {L.~R.}\ \bibnamefont {Vale}}, \bibinfo {author}
  {\bibfnamefont {G.~C.}\ \bibnamefont {Hilton}},\ and\ \bibinfo {author}
  {\bibfnamefont {K.}~\bibnamefont {Lehnert}},\ }\bibfield  {title} {\bibinfo
  {title} {Squeezed vacuum used to accelerate the search for a weak classical
  signal},\ }\href {https://doi.org/10.1103/PhysRevX.9.021023} {\bibfield
  {journal} {\bibinfo  {journal} {Phys. Rev. X}\ }\textbf {\bibinfo {volume}
  {9}},\ \bibinfo {pages} {021023} (\bibinfo {year} {2019})}\BibitemShut
  {NoStop}%
\bibitem [{\citenamefont {Dixit}\ \emph {et~al.}(2021)\citenamefont {Dixit},
  \citenamefont {Chakram}, \citenamefont {He}, \citenamefont {Agrawal},
  \citenamefont {Naik}, \citenamefont {Schuster},\ and\ \citenamefont
  {Chou}}]{dixit2021}%
  \BibitemOpen
  \bibfield  {author} {\bibinfo {author} {\bibfnamefont {A.~V.}\ \bibnamefont
  {Dixit}}, \bibinfo {author} {\bibfnamefont {S.}~\bibnamefont {Chakram}},
  \bibinfo {author} {\bibfnamefont {K.}~\bibnamefont {He}}, \bibinfo {author}
  {\bibfnamefont {A.}~\bibnamefont {Agrawal}}, \bibinfo {author} {\bibfnamefont
  {R.~K.}\ \bibnamefont {Naik}}, \bibinfo {author} {\bibfnamefont {D.~I.}\
  \bibnamefont {Schuster}},\ and\ \bibinfo {author} {\bibfnamefont
  {A.}~\bibnamefont {Chou}},\ }\bibfield  {title} {\bibinfo {title} {Searching
  for dark matter with a superconducting qubit},\ }\href
  {https://doi.org/10.1103/PhysRevLett.126.141302} {\bibfield  {journal}
  {\bibinfo  {journal} {Phys. Rev. Lett.}\ }\textbf {\bibinfo {volume} {126}},\
  \bibinfo {pages} {141302} (\bibinfo {year} {2021})}\BibitemShut {NoStop}%
\bibitem [{\citenamefont {Backes}\ \emph
  {et~al.}(2021{\natexlab{b}})\citenamefont {Backes}, \citenamefont {Palken},
  \citenamefont {Al~Kenany}, \citenamefont {Brubaker}, \citenamefont {Cahn},
  \citenamefont {Droster}, \citenamefont {Hilton}, \citenamefont {Ghosh},
  \citenamefont {Jackson}, \citenamefont {Lamoreaux} \emph
  {et~al.}}]{backes2021}%
  \BibitemOpen
  \bibfield  {author} {\bibinfo {author} {\bibfnamefont {K.}~\bibnamefont
  {Backes}}, \bibinfo {author} {\bibfnamefont {D.}~\bibnamefont {Palken}},
  \bibinfo {author} {\bibfnamefont {S.}~\bibnamefont {Al~Kenany}}, \bibinfo
  {author} {\bibfnamefont {B.}~\bibnamefont {Brubaker}}, \bibinfo {author}
  {\bibfnamefont {S.}~\bibnamefont {Cahn}}, \bibinfo {author} {\bibfnamefont
  {A.}~\bibnamefont {Droster}}, \bibinfo {author} {\bibfnamefont {G.~C.}\
  \bibnamefont {Hilton}}, \bibinfo {author} {\bibfnamefont {S.}~\bibnamefont
  {Ghosh}}, \bibinfo {author} {\bibfnamefont {H.}~\bibnamefont {Jackson}},
  \bibinfo {author} {\bibfnamefont {S.}~\bibnamefont {Lamoreaux}}, \emph
  {et~al.},\ }\bibfield  {title} {\bibinfo {title} {A quantum enhanced search
  for dark matter axions},\ }\href {https://doi.org/10.1038/s41586-021-03226-7}
  {\bibfield  {journal} {\bibinfo  {journal} {Nature}\ }\textbf {\bibinfo
  {volume} {590}},\ \bibinfo {pages} {238} (\bibinfo {year}
  {2021}{\natexlab{b}})}\BibitemShut {NoStop}%
\bibitem [{\citenamefont {Ghelfi}\ \emph {et~al.}(2014)\citenamefont {Ghelfi},
  \citenamefont {Laghezza}, \citenamefont {Scotti}, \citenamefont {Serafino},
  \citenamefont {Capria}, \citenamefont {Pinna}, \citenamefont {Onori},
  \citenamefont {Porzi}, \citenamefont {Scaffardi}, \citenamefont {Malacarne}
  \emph {et~al.}}]{ghelfi2014fully}%
  \BibitemOpen
  \bibfield  {author} {\bibinfo {author} {\bibfnamefont {P.}~\bibnamefont
  {Ghelfi}}, \bibinfo {author} {\bibfnamefont {F.}~\bibnamefont {Laghezza}},
  \bibinfo {author} {\bibfnamefont {F.}~\bibnamefont {Scotti}}, \bibinfo
  {author} {\bibfnamefont {G.}~\bibnamefont {Serafino}}, \bibinfo {author}
  {\bibfnamefont {A.}~\bibnamefont {Capria}}, \bibinfo {author} {\bibfnamefont
  {S.}~\bibnamefont {Pinna}}, \bibinfo {author} {\bibfnamefont
  {D.}~\bibnamefont {Onori}}, \bibinfo {author} {\bibfnamefont
  {C.}~\bibnamefont {Porzi}}, \bibinfo {author} {\bibfnamefont
  {M.}~\bibnamefont {Scaffardi}}, \bibinfo {author} {\bibfnamefont
  {A.}~\bibnamefont {Malacarne}}, \emph {et~al.},\ }\bibfield  {title}
  {\bibinfo {title} {A fully photonics-based coherent radar system},\
  }\href@noop {} {\bibfield  {journal} {\bibinfo  {journal} {Nature}\ }\textbf
  {\bibinfo {volume} {507}},\ \bibinfo {pages} {341} (\bibinfo {year}
  {2014})}\BibitemShut {NoStop}%
\bibitem [{\citenamefont {Skovgaard}(1951)}]{skovgaard1951greatest}%
  \BibitemOpen
  \bibfield  {author} {\bibinfo {author} {\bibfnamefont {H.}~\bibnamefont
  {Skovgaard}},\ }\bibfield  {title} {\bibinfo {title} {On the greatest and the
  least zero of laguerre polynomials},\ }\href@noop {} {\bibfield  {journal}
  {\bibinfo  {journal} {Matematisk tidsskrift. B}\ ,\ \bibinfo {pages} {59}}
  (\bibinfo {year} {1951})}\BibitemShut {NoStop}%
\bibitem [{\citenamefont {Becerra}\ \emph {et~al.}(2013)\citenamefont
  {Becerra}, \citenamefont {Fan}, \citenamefont {Baumgartner}, \citenamefont
  {Goldhar}, \citenamefont {Kosloski},\ and\ \citenamefont
  {Migdall}}]{becerra2013experimental}%
  \BibitemOpen
  \bibfield  {author} {\bibinfo {author} {\bibfnamefont {F.}~\bibnamefont
  {Becerra}}, \bibinfo {author} {\bibfnamefont {J.}~\bibnamefont {Fan}},
  \bibinfo {author} {\bibfnamefont {G.}~\bibnamefont {Baumgartner}}, \bibinfo
  {author} {\bibfnamefont {J.}~\bibnamefont {Goldhar}}, \bibinfo {author}
  {\bibfnamefont {J.}~\bibnamefont {Kosloski}},\ and\ \bibinfo {author}
  {\bibfnamefont {A.}~\bibnamefont {Migdall}},\ }\bibfield  {title} {\bibinfo
  {title} {Experimental demonstration of a receiver beating the standard
  quantum limit for multiple nonorthogonal state discrimination},\ }\href@noop
  {} {\bibfield  {journal} {\bibinfo  {journal} {Nat. Photon.}\ }\textbf
  {\bibinfo {volume} {7}},\ \bibinfo {pages} {147} (\bibinfo {year}
  {2013})}\BibitemShut {NoStop}%
\bibitem [{\citenamefont {Cui}\ \emph {et~al.}(2022)\citenamefont {Cui},
  \citenamefont {Horrocks}, \citenamefont {Hao}, \citenamefont {Guha},
  \citenamefont {Peyghambarian}, \citenamefont {Zhuang},\ and\ \citenamefont
  {Zhang}}]{cui2022quantum}%
  \BibitemOpen
  \bibfield  {author} {\bibinfo {author} {\bibfnamefont {C.}~\bibnamefont
  {Cui}}, \bibinfo {author} {\bibfnamefont {W.}~\bibnamefont {Horrocks}},
  \bibinfo {author} {\bibfnamefont {S.}~\bibnamefont {Hao}}, \bibinfo {author}
  {\bibfnamefont {S.}~\bibnamefont {Guha}}, \bibinfo {author} {\bibfnamefont
  {N.}~\bibnamefont {Peyghambarian}}, \bibinfo {author} {\bibfnamefont
  {Q.}~\bibnamefont {Zhuang}},\ and\ \bibinfo {author} {\bibfnamefont
  {Z.}~\bibnamefont {Zhang}},\ }\bibfield  {title} {\bibinfo {title} {Quantum
  receiver enhanced by adaptive learning},\ }\href@noop {} {\bibfield
  {journal} {\bibinfo  {journal} {Light Sci. Appl.}\ }\textbf {\bibinfo
  {volume} {11}},\ \bibinfo {pages} {344} (\bibinfo {year} {2022})}\BibitemShut
  {NoStop}%
\bibitem [{\citenamefont {Kaubruegger}\ \emph {et~al.}(2019)\citenamefont
  {Kaubruegger}, \citenamefont {Silvi}, \citenamefont {Kokail}, \citenamefont
  {van Bijnen}, \citenamefont {Rey}, \citenamefont {Ye}, \citenamefont
  {Kaufman},\ and\ \citenamefont {Zoller}}]{kaubruegger2019variational}%
  \BibitemOpen
  \bibfield  {author} {\bibinfo {author} {\bibfnamefont {R.}~\bibnamefont
  {Kaubruegger}}, \bibinfo {author} {\bibfnamefont {P.}~\bibnamefont {Silvi}},
  \bibinfo {author} {\bibfnamefont {C.}~\bibnamefont {Kokail}}, \bibinfo
  {author} {\bibfnamefont {R.}~\bibnamefont {van Bijnen}}, \bibinfo {author}
  {\bibfnamefont {A.~M.}\ \bibnamefont {Rey}}, \bibinfo {author} {\bibfnamefont
  {J.}~\bibnamefont {Ye}}, \bibinfo {author} {\bibfnamefont {A.~M.}\
  \bibnamefont {Kaufman}},\ and\ \bibinfo {author} {\bibfnamefont
  {P.}~\bibnamefont {Zoller}},\ }\bibfield  {title} {\bibinfo {title}
  {Variational spin-squeezing algorithms on programmable quantum sensors},\
  }\href {https://doi.org/10.1103/PhysRevLett.123.260505} {\bibfield  {journal}
  {\bibinfo  {journal} {Phys. Rev. Lett.}\ }\textbf {\bibinfo {volume} {123}},\
  \bibinfo {pages} {260505} (\bibinfo {year} {2019})}\BibitemShut {NoStop}%
\bibitem [{\citenamefont {Meyer}\ \emph {et~al.}(2021)\citenamefont {Meyer},
  \citenamefont {Borregaard},\ and\ \citenamefont
  {Eisert}}]{meyer2021variational}%
  \BibitemOpen
  \bibfield  {author} {\bibinfo {author} {\bibfnamefont {J.~J.}\ \bibnamefont
  {Meyer}}, \bibinfo {author} {\bibfnamefont {J.}~\bibnamefont {Borregaard}},\
  and\ \bibinfo {author} {\bibfnamefont {J.}~\bibnamefont {Eisert}},\
  }\bibfield  {title} {\bibinfo {title} {A variational toolbox for quantum
  multi-parameter estimation},\ }\href@noop {} {\bibfield  {journal} {\bibinfo
  {journal} {npj Quantum Inf.}\ }\textbf {\bibinfo {volume} {7}},\ \bibinfo
  {pages} {89} (\bibinfo {year} {2021})}\BibitemShut {NoStop}%
\bibitem [{\citenamefont {Kaubruegger}\ \emph {et~al.}(2023)\citenamefont
  {Kaubruegger}, \citenamefont {Shankar}, \citenamefont {Vasilyev},\ and\
  \citenamefont {Zoller}}]{kaubruegger2023optimal}%
  \BibitemOpen
  \bibfield  {author} {\bibinfo {author} {\bibfnamefont {R.}~\bibnamefont
  {Kaubruegger}}, \bibinfo {author} {\bibfnamefont {A.}~\bibnamefont
  {Shankar}}, \bibinfo {author} {\bibfnamefont {D.~V.}\ \bibnamefont
  {Vasilyev}},\ and\ \bibinfo {author} {\bibfnamefont {P.}~\bibnamefont
  {Zoller}},\ }\bibfield  {title} {\bibinfo {title} {Optimal and variational
  multiparameter quantum metrology and vector-field sensing},\ }\href
  {https://doi.org/10.1103/PRXQuantum.4.020333} {\bibfield  {journal} {\bibinfo
   {journal} {PRX Quantum}\ }\textbf {\bibinfo {volume} {4}},\ \bibinfo {pages}
  {020333} (\bibinfo {year} {2023})}\BibitemShut {NoStop}%
\bibitem [{\citenamefont {Sinanan-Singh}\ \emph {et~al.}(2023)\citenamefont
  {Sinanan-Singh}, \citenamefont {Mintzer}, \citenamefont {Chuang},\ and\
  \citenamefont {Liu}}]{sinanan2023single}%
  \BibitemOpen
  \bibfield  {author} {\bibinfo {author} {\bibfnamefont {J.}~\bibnamefont
  {Sinanan-Singh}}, \bibinfo {author} {\bibfnamefont {G.~L.}\ \bibnamefont
  {Mintzer}}, \bibinfo {author} {\bibfnamefont {I.~L.}\ \bibnamefont
  {Chuang}},\ and\ \bibinfo {author} {\bibfnamefont {Y.}~\bibnamefont {Liu}},\
  }\bibfield  {title} {\bibinfo {title} {Single-shot quantum signal processing
  interferometry},\ }\href@noop {} {\bibfield  {journal} {\bibinfo  {journal}
  {arXiv:2311.13703}\ } (\bibinfo {year} {2023})}\BibitemShut {NoStop}%
\bibitem [{\citenamefont {Oh}\ \emph {et~al.}(2024)\citenamefont {Oh},
  \citenamefont {Chen}, \citenamefont {Wong}, \citenamefont {Zhou},
  \citenamefont {Huang}, \citenamefont {Nielsen}, \citenamefont {Liu},
  \citenamefont {Neergaard-Nielsen}, \citenamefont {Andersen}, \citenamefont
  {Jiang} \emph {et~al.}}]{oh2024entanglement}%
  \BibitemOpen
  \bibfield  {author} {\bibinfo {author} {\bibfnamefont {C.}~\bibnamefont
  {Oh}}, \bibinfo {author} {\bibfnamefont {S.}~\bibnamefont {Chen}}, \bibinfo
  {author} {\bibfnamefont {Y.}~\bibnamefont {Wong}}, \bibinfo {author}
  {\bibfnamefont {S.}~\bibnamefont {Zhou}}, \bibinfo {author} {\bibfnamefont
  {H.-Y.}\ \bibnamefont {Huang}}, \bibinfo {author} {\bibfnamefont {J.~A.}\
  \bibnamefont {Nielsen}}, \bibinfo {author} {\bibfnamefont {Z.-H.}\
  \bibnamefont {Liu}}, \bibinfo {author} {\bibfnamefont {J.~S.}\ \bibnamefont
  {Neergaard-Nielsen}}, \bibinfo {author} {\bibfnamefont {U.~L.}\ \bibnamefont
  {Andersen}}, \bibinfo {author} {\bibfnamefont {L.}~\bibnamefont {Jiang}},
  \emph {et~al.},\ }\bibfield  {title} {\bibinfo {title} {Entanglement-enabled
  advantage for learning a bosonic random displacement channel},\ }\href@noop
  {} {\bibfield  {journal} {\bibinfo  {journal} {arXiv:2402.18809}\ } (\bibinfo
  {year} {2024})}\BibitemShut {NoStop}%
\end{thebibliography}
%

\end{document}